\newtheorem{Lemma}{Lemma}
\newtheorem{Theorem}{Theorem}
\newtheorem{Definition}{Definition}
\def\comment#1{}%
\def\withcomments{%
  \newcounter{mycommentcounter}%
   \def\comment##1{\refstepcounter{mycommentcounter}%
    \ifhmode%
     \unskip%
     {\dimen1=\baselineskip \divide\dimen1 by 2 %
       \raise\dimen1\llap{\tiny
	{-\themycommentcounter-}}}\fi%
     \marginpar[{\renewcommand{\baselinestretch}{0.8}%
       \hspace*{0em}\begin{minipage}{9em}\footnotesize%
[\themycommentcounter]:%
\raggedright ##1\end{minipage}}]{\renewcommand{\baselinestretch}{0.8}%
       \begin{minipage}{9em}\footnotesize%
[\themycommentcounter]: \raggedright%
##1\end{minipage}}}%
  }
\newcommand{\chihhung}[1]{\comment{\textcolor{red}{\textbf{CL:}} #1}}
\newcommand{\andreas}[1]{\comment{\textcolor{green}{\textbf{AG:}} #1}}
\newcommand{\R}{\ensuremath{\mathbb{R}}}
\renewcommand{\P}{\ensuremath{\mathcal{P}}}
\newcommand{\SPM}{\ensuremath{\mathcal{SPM}}}
\newcommand{\FV}{\ensuremath{\mathcal{FV}}}
\newcommand{\kth}{\ensuremath{k^{\mathrm{th}}}\xspace}
\newcommand{\kthorder}{\kth-order\xspace}
\newcommand{\rephrase}[3]{\noindent\textbf{#1~#2.}~\emph{#3}}
\newcommand{\deleted}[1]{}
\begin{document}

\title{Higher Order City Voronoi Diagrams}
\author{Andreas Gemsa$^1$ \and D. T. Lee$^{2,3}$ \and Chih-Hung Liu$^{1, 2}$ \and Dorothea Wagner$^1$}
\institute{Karlsruhe Institute of Technology, Germany \and Academia Sinica, Taiwan \and National Chung Hsing University, Taiwan}

\date{}
\maketitle

\begin{abstract}

We investigate higher-order Voronoi diagrams
in the \emph{city metric}.
This metric is induced by quickest paths in the $L_1$ metric in the presence of an accelerating transportation network of axis-parallel line segments.
For the structural complexity of \kthorder city Voronoi diagrams of $n$ point sites, we show an upper bound of $O(k(n-k)+kc)$ and a lower bound of $\Omega(n+kc)$, where $c$ is the complexity of the transportation network.
This is quite different from the bound $O(k(n-k))$ in the Euclidean metric \cite{Lee-82}.
For the special case where $k=n-1$ the complexity in the Euclidean metric is $O(n)$,
while that in the city metric is $\Theta(nc)$.
Furthermore, we develop an $O(k^2(n+c)\log n)$-time iterative algorithm to compute the $k^{\mathrm{th}}$-order city Voronoi diagram
and an $O(nc\log^2(n+c)\log n)$-time divide-and-conquer algorithm to compute the farthest-site city Voronoi diagram.

\deleted{
We investigate higher-order Voronoi diagrams
in the presence of a transportation network on the $L_1$ plane, which is commonly referred to as \emph{city metric}.
More specifically, we derive structural complexities and develop algorithms.
For the structural complexity of \kthorder city Voronoi diagrams
we show a lower bound of $\Omega(n+kc)$ and an upper bound of $O(k(n-k)+kc)$, where $c$ is the complexity of the transportation network.
This is quite different from the bound $O(k(n-k))$ in the Euclidean metric \cite{Lee-82}.
In particular, for $k=n-1$, the complexity in the Euclidean metric is $O(n)$,
while that in the city metric is $\Theta(nc)$. \andreas{I'm not sure this belongs in an abstract}
Furthermore, we develop an $O(k^2(n+c)\log n)$-time iterative algorithm for the $k^{\mathrm{th}}$-order city Voronoi diagram
and an $O(nc\log^2(n+c)\log n)$-time divide-and-conquer algorithm for the farthest-site city Voronoi diagram.
Our results further indicates that the impact of the transportation network increases with the value of $k$ rather than being a constant,
and the underlying distance metric will affect the structural complexity of higher-order Voronoi diagrams a lot.
}

\deleted{
We address $k$ nearest neighbor problems in the presence of transportation networks on the $L_1$ plane, which is commonly referred to as city metric.
More specifically, we investigate the higher-order city Voronoi diagrams
to derive the complexities and develop algorithms.
For the structural complexity of \kthorder city Voronoi diagrams
we show a lower bound of $\Omega(n+kc)$ and an upper bound of $O(k(n-k)+kc)$, where $c$ is the complexity of the transportation network.
This is quite different from the bound $O(k(n-k))$ in the Euclidean metric \cite{Lee-82}.
In particular, for $k=n-1$, the complexity in the Euclidean metric is $O(n)$,
while that in the city metric is $\Theta(nc)$.
Furthermore, we develop an iterative algorithm to compute the $k^{\mathrm{th}}$-order city Voronoi diagram
in $O(k^2(n+c)\log n)$ time
and a divide-and-conquer algorithm to compute the farthest-site city Voronoi diagram in $O(nc\log^2(n+c)\log n)$ time.
Our complexity results further indicates that the impact of the transportation network is not a constant
but increases with the value of $k$, and the underlying distance metric will affect the structural complexity of higher-order Voronoi diagram a lot. }

\end{abstract}

\section{Introduction}

In many modern cities, e.g., Manhattan, the layout of the road network resembles a grid.
Most roads are either horizontal or vertical, and thus pedestrians can move only either horizontally or vertically.
Large, modern cities also have a public transportation network (e.g., bus and rail systems)
to ensure easy and fast travel between two places.
Traveling in such cities can be modeled well by the \emph{city metric}.
This metric is induced by quickest paths in the $L_1$ metric in the presence of an accelerating transportation network.
We assume that the traveling speed on the transportation network is a given parameter $\nu > 1$.
The speed while traveling off the network is~1.
Further, we assume that the transportation network can be accessed at any point.
Then
the distance between two points is the minimum time required to travel between them.

For a given set $S$ of $n$ point sites (i.e., a set of $n$ coordinates) and a transportation network in the plane,
\emph{the $k^{\mathrm{th}}$-order city Voronoi diagram} $V_k(S)$
partitions the plane into \emph{Voronoi regions}
such that all points in a Voronoi region share the same $k$ nearest sites
with respect to the city metric.

The \kthorder city Voronoi diagram can be used to resolve the following situation:
a pedestrian wants to know the $k$ nearest facilities (e.g., $k$ stores, or $k$ hospitals)
such that he can make a well-informed decision as to which facility to go to.
For this kind of scenario,
the \kthorder city Voronoi diagram
provides a way to determine the $k$ nearest facilities, by modeling the facilities as point sites.


The nearest-site (first-order) city Voronoi diagram
has already been well-studied \cite{AAP-04,BC-05,BKC-09,GSW-08}.
Its structural complexity (the size)
has been proved to be $O(n+c)$ \cite{AAP-04},
where $c$ is the complexity of the transportation network.
Such a Voronoi diagram can be constructed in $O((n+c)\log(n+c))$ time \cite{BKC-09}.
However, to the best of our knowledge
there is no existing work regarding $k^{\mathrm{th}}$-order
or farthest-site (i.e., $(n-1)^\mathrm{st}$-order) city Voronoi diagrams.

Contrary to \kthorder city Voronoi diagrams,
\kthorder Euclidean Voronoi diagrams
have been studied extensively for over thirty years.
Their structural complexity has been shown to be $O(k(n-k))$ \cite{Lee-82}.
They can be computed by an iterative construction method in $O(k^2n\log n)$ time \cite{Lee-82}
or by a different approach based on geometric duality and arrangements in $O(n^2+k(n-k)\log^2 n)$ time~\cite{CE-87}.
Additionally, there are several randomized algorithms \cite{ABMS-98,Mulmuley-91}
and on-line algorithms \cite{AS-92,BDT-93}.

One of the most significant differences between the Euclidean metric and the city metric that influences the computation and complexity of Voronoi diagrams is the complexity of a bisector between two points.
In the Euclidean or the $L_1$ metric such a bisector has constant complexity, while in the city metric the complexity may be $\Omega(c)$~\cite{AAP-04} and can even be a closed curve.
Since the properties of a bisector between two points significantly affect the properties of Voronoi diagrams,
a \kthorder city Voronoi diagram can be very different from a Euclidean one.
First, this property makes it non-trivial to apply existing approaches for constructing Euclidean Voronoi diagrams to the city Voronoi diagrams.
Secondly, this property also indicates that the complexity of \kthorder Voronoi diagrams may depend significantly on the complexity of the transportation network.

\begin{table}[tb]
\caption{\small{Comparison between the Euclidean and the city metric. Our results are marked by $^\dag$.}}\label{tb-comparison}
\centering
\hspace*{-1.2 cm}\includegraphics[clip, width=15cm]{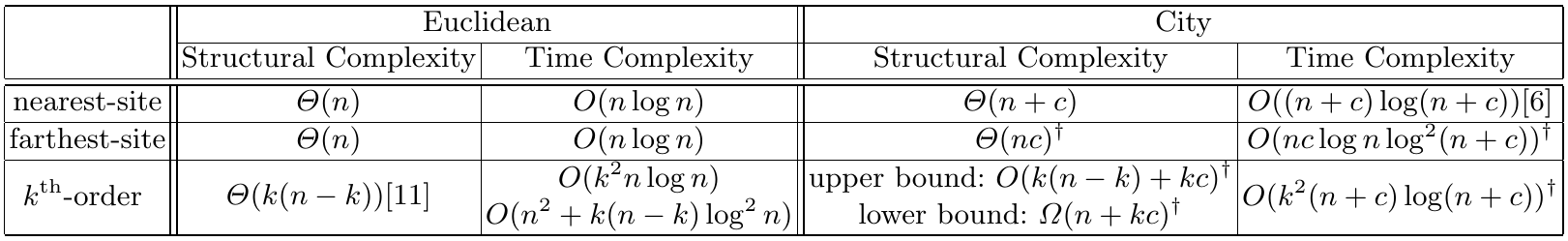}
\end{table}

In this paper, we derive bounds for the structural complexity of the \kthorder Voronoi diagram
and develop algorithms for computing the \kthorder city Voronoi diagram.
The remainder of this paper is organized as follows.
In Section~\ref{sec-prelim},
we introduce two important concepts, wavefront propagation \cite{AAP-04} and shortest path maps \cite{BKC-09},
which are essential for the proofs in the subsequent sections.
In Section~\ref{sec-complexity},
we adopt the wavefront propagation to introduce a novel interpretation of the iterative construction method of Lee \cite{Lee-82},
and use this interpretation to derive an upper bound of $O(k(n-k)+kc)$ for the structural complexity of \kthorder city Voronoi diagrams,
where~$c$ is the complexity of the transportation network.
Then, we construct a worst-case example to obtain a lower bound of $\Omega(n+kc)$.
Finally,
we extend the insights of Section~\ref{sec-complexity} to develop an
iterative algorithm to compute \kthorder city Voronoi diagrams in $O(k^2(n+c)\log(n+c))$ time
(see Section~\ref{sec-algorithms}).
Moreover, we give a divide-and-conquer approach to compute farthest-site city Voronoi diagrams in $O(nc \log n \log^2(n+c))$ time.
We conclude the paper in Section~\ref{sec-conclusion}.

For an overview of our contribution and a comparison between Euclidean and city metric see Table~\ref{tb-comparison}.

\deleted{
\begin{table}[tb]
\caption{\small{Comparison between the Euclidean and the City metrics. Our results are marked by an asterisk.}}\label{tb-comparison}
\centering

\begin{tabular}{|c||c|c||c|c|}
\hline
 \multirow{2}{*}{}& \multicolumn{2}{c||}{Euclidean} & \multicolumn{2}{c|}{City}\\
 \cline{2-5}
 & Structural Complexity & Time Complexity & Structural Complexity & Time Complexity\\
 \hline
 \hline
 nearest-site & $O(n)$ & $O(n\log n)$ & $O(n+c)$ & $O((n+c)\log (n+c))$ \\
 \hline
 farthest-site & $O(n)$ & $O(n\log n)$ & $\Theta(nc)$* & $O(nc \log n \log^2(n+c))$*\\
 \hline
 \multirow{2}{*}{\kthorder} & \multirow{2}{*}{$O(k(n-k))$} & $O(k^2n\log n)$ & upper bound: $O(k(n-k)+kc)$* & \multirow{2}{*}{$O(k^2(n+c)\log(n+c))$*}\\
  & & $O(n^2+k(n-k)\log^2 n)$  & lower bound:  $\Omega(n+kc)$* & \\
 \hline

\end{tabular}

\end{table}
}

\section{Preliminaries}\label{sec-prelim}
In this section we introduce the notation used throughout this paper
for \kthorder city Voronoi diagrams.
Then, we introduce two well-established concepts in the context of Voronoi diagrams,
which are important for the proofs in the subsequent sections.

A transportation network is a planar straight-line graph $C = (V_C, E_C)$ with  \emph{isothetic} edges only, i.e., edges that are either horizontal or vertical,
and all transportation edge have identical speed $\nu >1$.
We define $c := |V_C|$, and since the degree of a vertex in $V_C$ is at most four,
$|E_C|$ is $\Theta(c)$.
We denote the distance of two points in the $L_1$ metric by $d_1$ and in the city metric by $d_C$.
Similarly, we denote the bisector between two points by $B_1$ and $B_C$ for the $L_1$ and city metric, respectively.
Additionally, for the city metric we define the distance between a point $p \in \R^2$ and a set of points $H \subset \R^2$ to be
$d_C(p, H) = \max_{q \in H}d_C(p,H)$.
This allows us to define the bisector $B_C(H_1,H_2)= \{r\in \R^2 \mid d_C(r,H_1)=d_C(r,H_2)\}$ between two sets of points $H_1$ and  $H_2$.

By $V_k(H,S)$ we denote a \emph{Voronoi region} of $V_k(S)$ associated with a $k$-element subset $H\subset S$.
The common boundary between two adjacent Voronoi regions $V_k(H_1, S)$ and $V_k(H_2, S)$
is called a \emph{Voronoi edge}.
This Voronoi edge is a part of $B_C(H_1,H_2)=B_C(p,q)$
where $H_1\setminus H_2=\{p\}$ and $H_2\setminus H_1=\{q\}$ \cite{Lee-82}.
The common intersection among more than two Voronoi regions
is called a \emph{Voronoi vertex}.
Without loss of generality,
we assume that no point in the plane is equidistant from four sites in $S$ with respect to the city metric,
ensuring that the degree of a Voronoi vertex is
exactly three.

\paragraph{\textbf{Wavefront Propagation.}}
The wavefront propagation
is a well-established model to define Voronoi diagrams \cite{AAP-04}.
In Section~\ref{sec-complexity},
we will use this concept to interpret the formation of $V_k(S)$
and analyze its structural complexity.

For a fixed site $p\in S$, let $W_p(x) = \{q \mid q\in \mathbb{R}^2, d_C(p, q)=x\}$.
This means that for a fixed $x \in \R^+_0$ the wavefront $W_p(x)$ is the circle centered at $p$ with radius $x$.
We call $p$ the \emph{source} of $W_p(x)$.
Note that we can view $W_p(x)$ as the wavefront at time $x$ of the wave that originated in $p$ at time 0.
We refer to such a wavefront as $W_p$ if the value of $x$ is unimportant.

\begin{figure}[t]
\begin{center}
\begin{minipage}[b]{0.5\textwidth}
 \centering
 \includegraphics[width=6.5cm]{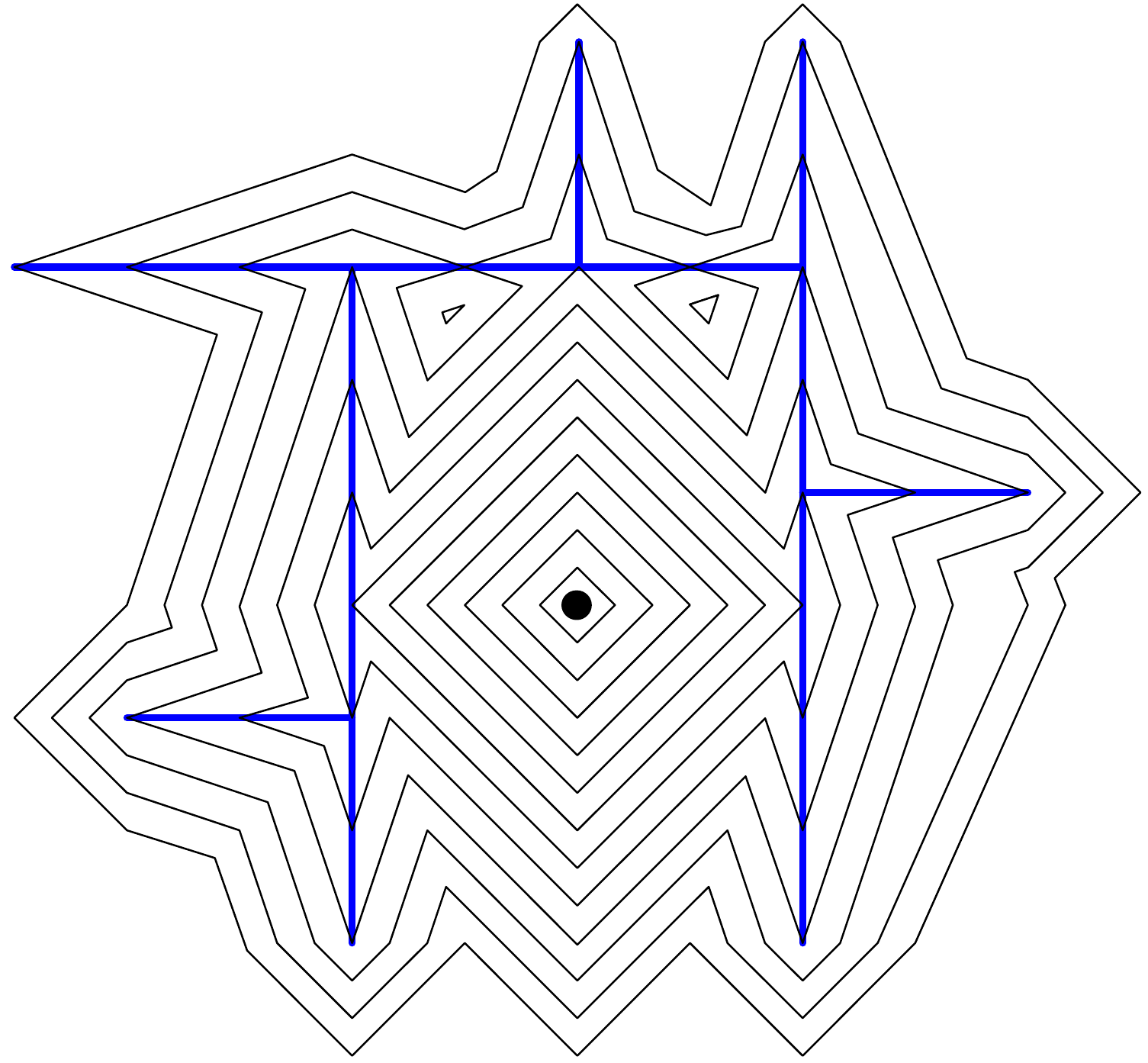}
 \caption{Wavefront Propagation.}
 \label{fig-wavefront}
\end{minipage}
\hfill
\begin{minipage}[b]{0.49\textwidth}
 \centering
 \includegraphics[width=4cm]{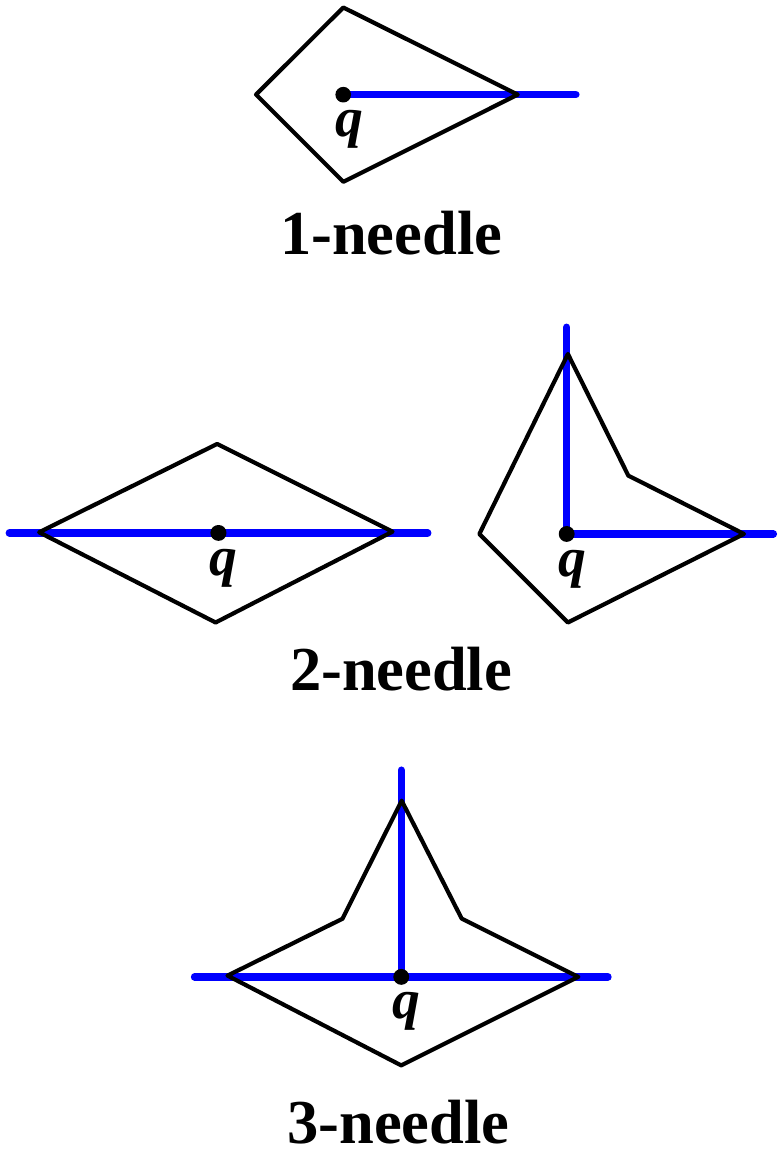}
 \caption{ 1-needle, 2-needle and 3-needle. }
 \label{fig-needles}
\end{minipage}
\end{center}
\end{figure}

\deleted{
\begin{figure}[bt]
\centering
\subfigure[]{\includegraphics[clip, width=8cm]{figure/wavefront}\label{fig-wavefront}}
\subfigure[]{\includegraphics[clip, width=4cm]{figure/needles_2}\label{fig-needles}}
\caption{(a) Wavefront Propagation. (b) 1-needle, 2-needle and 3-needle.}
\end{figure}}

Initially, the wavefront $W_p$ is a diamond. When it touches a part of the transportation network for the first time it changes its propagation speed and, hence its shape; see Fig.~\ref{fig-wavefront}.
Certain points on the transportation network play an important role to determine the structural complexity of \kthorder city Voronoi diagrams.
\chihhung{Since each point will be touched a wavefront finally,
``the points on the transportation network that are touched by a wavefront'' is not clear.
I make some changes.}
Thus, we introduce the following definitions.
For a point $v \in \R^2$,
let $P(v)$ denote the \emph{isothetic projection} of $v$ onto the transportation network, i.e.,
we shoot an isothetic half-ray starting at $v$ in each of the four directions and for each half-ray we add its first intersection with an edge of the transportation network to $P(v)$. It is easy to see that there are at most four such intersections.
For a set $X \subset \R^2$, we denote the \emph{isothetic projection of the set $X$} as $\P(X) =\bigcup_{v\in X}P(v)$.
For a site $p \in S$, we call the set $A(p) = P(p) \cup V_C \cup \P(V_C) \cup \{p\}$ \emph{activation points}
(we added $\{p\}$ to the list for ease of argumentation in some of our proofs).

As shown by Aichholzer et al. \cite{AAP-04}, the wavefront
$W_p$ changes its propagation speed only if it hits a vertex in $A(p)$.
Since the shape of $W_p$ can become very complex after it hits multiple activation points,
we make the following simplification for the remainder of this paper:
if a wavefront $W_p$ touches a point $q \in A(p)$ we do not change the propagation speed of~$W_p$.
Instead, we start a new wavefront at $q$, which, in turn,
starts new wavefronts at points in $A(p)$ if it reaches them earlier than any other wavefront.
Hereafter, the start of the propagation of a new wavefront is called an \emph{activation event}, or we say a wavefront is \emph{activated}.
\chihhung{We did not use the term ``originate'' throughout the paper.
So use it in the proof of Lemma~\ref{lem-mix-region} or remove the sentence.}
The shape of such a new wavefront depends on the position of $q$ on the transportation network.
It can be categorized into one of three different shapes: 1-needle, 2-needle, and 3-needle \cite{AAP-04} (see Fig.~\ref{fig-needles}).
To simplify things, we treat a 2-needle (3-needle) as two (three) 1-needles (see Fig.~\ref{fig-needle-and-bisector}(a)).

\begin{figure}[bt]
\centering
\includegraphics[clip, width=12cm]{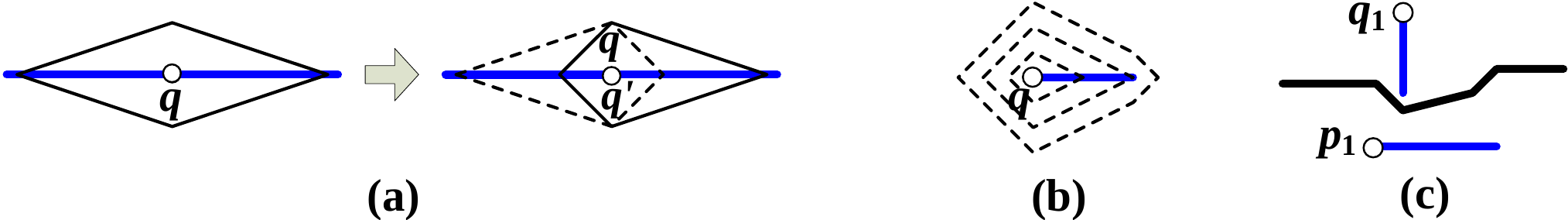}
\caption{(a) A 2-needle is two 1-needles. (b) Wavefront propagation of a 1-needle. (c) $B_1(\eta_p(p_1),\eta_q(q_1))$.}\label{fig-needle-and-bisector}
\end{figure}

When a 1-needle reaches the end of the corresponding network segment,
as shown in Fig.~\ref{fig-needle-and-bisector}(b),
its shape changes (permanently) \cite{AAP-04}.
In order to interpret the propagation of a 1-needle,
Bae et al. \cite{BKC-09} introduced a structure called \emph{needle}.
A needle $\eta_p(q,q')$ is a network segment $\overline{qq'}$ with weight $d_C(p,q)$,
where $p\in S$ and $q,q'\in A(p)$.
Propagating a wavefront from $\eta_p(q,q')$
is equivalent to propagating a 1-needle from $q$ on the network segment $\overline{qq'}$ at time $d_C(p,q)$.
If $q'$ is obvious or unimportant we may refer to $\eta_p(q,q')$ as $\eta_p(q)$.
Bae et al. also defined
the $L_1$ distance $d_1(\eta_p(q,q'),r)$ between a needle $\eta_p(q,q')$ and a point $r$
as $d_C(p,q)$ plus the length of a quickest path from $q$ to $r$ accelerated by $\overline{qq'}$.
Thus, the bisector
$B_1(\eta_p(p_1),\eta_q(q_1))$ between two needles
$\eta_p(p_1)$ and $\eta_q(q_1)$ is well defined (see Fig.~\ref{fig-needle-and-bisector}(c)).

\deleted{
\andreas{removed something. I dont think we really need this. If you disagree I find a shorter way of explaining this.}

In order to analyze the complexity of the \kthorder Voronoi diagrams,
for each $v\in P(p)\cap P(q)$ where $p\in S$ and $q\in (S\cup V_C)\setminus \{p\}$,
we create a copy $v'$ for any point $v \in P(p) \cap P(q)$, and ensure that $v$ is only in $P(p)$ and $v'$ is only in $P(q)$.
Therefore, for a site $p\in S$
the $W_p$ changes the speed at a point $v \in \mathcal{P}(S)$
only if $v\in P(p)$. \chihhung{I still think the size of $\P(S)$, $\P(V_C)$, and $A(p)$.
If the space is enough, maybe we can consider it.}
}

\deleted{ is significantly more difficult if for two sites $p, q\in S$,
$P(p)\cap P(q)$ is non-empty as shown in Fig.~\ref{fig-UnActived}(a) or
if for two points $p \in S$, $q \in V_C$ the intersection $\P(V_C)\cap \P(S)$ is nonempty as shown in Fig.~\ref{fig-UnActived}(b).
To simplify matters, we create a copy $v'$ for any point $v \in P(p) \cap P(q)$, and ensure that $v$ is only in $P(p)$ and $v'$ is only in $P(q)$.
Therefore, for a site $p\in S$
the $W_p$ changes the speed at a point $v \in \mathcal{P}(S)$
only if $v\in P(p)$.
Under these modifications,
it is still true that
$|\mathcal{P}(V_C)|=O(c)$, $|\mathcal{P}(S)|=O(n)$, and $|A(p)|=O(c)$ for each site $p\in S$.
\andreas{check if we really need this}}

\deleted{
\begin{figure}[bt]
\centering
\includegraphics[clip, width=12cm]{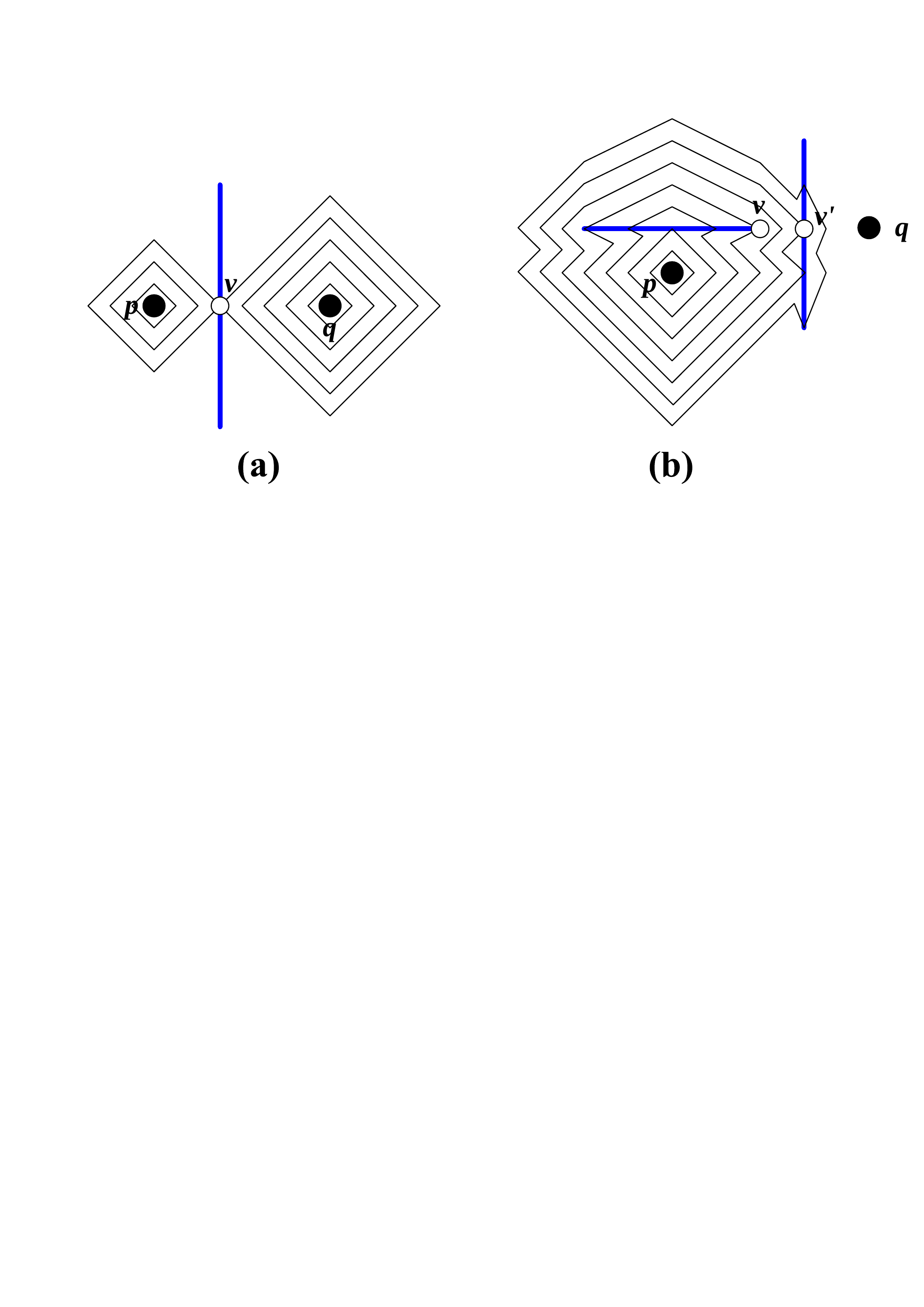}
\caption{(a) Both $W_p$ and $W_q$ change the speed at $v$, i.e. two activation events occur in $v$,
            since $v\in \P(p)\cap \P(q)$.
         (b) $W_p$ changes the speed at $v'$ because $v'$ is in $P(v)\subset A(p)$ not because $v'$ is in $P(q)$.
         }\label{fig-UnActived}
\end{figure}
}

\paragraph{\textbf{Shortest Path Map.}}
We use the wavefront model to define shortest path maps \cite{BKC-09},
and use this concept to explain the formation of \emph{mixed vertices} in Section~\ref{sub-mix},
which are important for deriving the structural
complexity of the \kthorder city Voronoi diagram.

For a site $p\in S$ its shortest path map ${\cal SPM}_p$ is a planar subdivision that can be obtained as follows:
start by propagating a wavefront from the site $p$.
When a point $q\in A(p)$ is touched for the first time by a wavefront,
propagate an additional wavefront from $\eta_p(q)$.
Eventually,
each point $r \in \R^2$ is touched for the first time by a wavefront propagated from a needle $\eta_p(q)$,
where $q\in A(p)$ and $d_1(r, \eta_p(q))=\mbox{min}_{q'\in A(p)}d_1(r, \eta_p(q'))$,
and $q$ is called the \emph{predecessor} of $r$.
This induces $\SPM_p$.
In detail,
$\SPM_p$ partitions the plane into at most $|A(p)|=O(c)$ regions $\SPM_p(q)$
such that all points $r\in\SPM_p(q)$ share the same predecessor $q$ and $q$ is on a quickest path
from $p$ to $r$, i.e., $d_C(p,r)=d_C(p,q)+d_C(q,r)=d_1(r, \eta_p(q))$.
\chihhung{minor rephrasing}
As proved in \cite{BKC-09}, the common edge between $\SPM_p(q)$ and $\SPM_p(q')$ where $q,q'\in A(p)$
belongs to the bisector $B_1(\eta_p(q),\eta_p(q'))$.
Fig.~\ref{fig-Bisectors} illustrates an example of the function of shortest path maps where
the two Voronoi regions of $V_1(\{p,q\})$ are partitioned by $\SPM_p$ and $\SPM_q$, respectively.
\deleted{Without loss of generality,
in order to guarantee that each vertex of $\SPM_p$ has exactly a degree of 3,
we assume no point in the plane 
has the same distance to four needles $\eta_p(q)$, $q\in A(p)$.
\chihhung{The number of regions in $\SPM_p$ generated by $P(p)$, i.e. $\P(S)$ is trivially at most 4.
Therefore, the number of such regions in all $n$ $\SPM_p$ is $O(n)$.
This is another viewpoint to understand the upper bound.}}

\section{Complexity}\label{sec-complexity}

In this section we
derive an upper and a lower bound of the structural complexity of the \kthorder city Voronoi diagram $V_k(S)$.
In Section~\ref{sub-mix}, we first introduce a special degree-2 vertex on a Voronoi edge called \emph{mixed vertex}
which is similar to the mixed Voronoi vertices of Cheong et al.~\cite{CEGGHLLN-11} for farthest-polygon Voronoi diagrams.
Then we derive an upper bound of the structural complexity of $V_k(S)$ in terms of the number of mixed vertices and Voronoi regions.
In Section~\ref{sub-upper}, we adopt the wavefront concept to introduce a new interpretation for the iterative construction of $V_k(S)$ by Lee \cite{Lee-82}. This yields an upper bound for the structural complexity of $V_k(S)$.
In Section~\ref{sub-lower} we construct a worst-case example to obtain a lower bound for the structural complexity of $V_k(S)$.
\chihhung{We repeat the structural complexity of a \kthorder Voronoi diagram many times above.
Is it possible to give it a short name?}
\andreas{that would be good...}

\subsection{Mixed Vertices}\label{sub-mix}

\deleted
{
\begin{figure}[bt]
\centering
\includegraphics[clip, width=8cm]{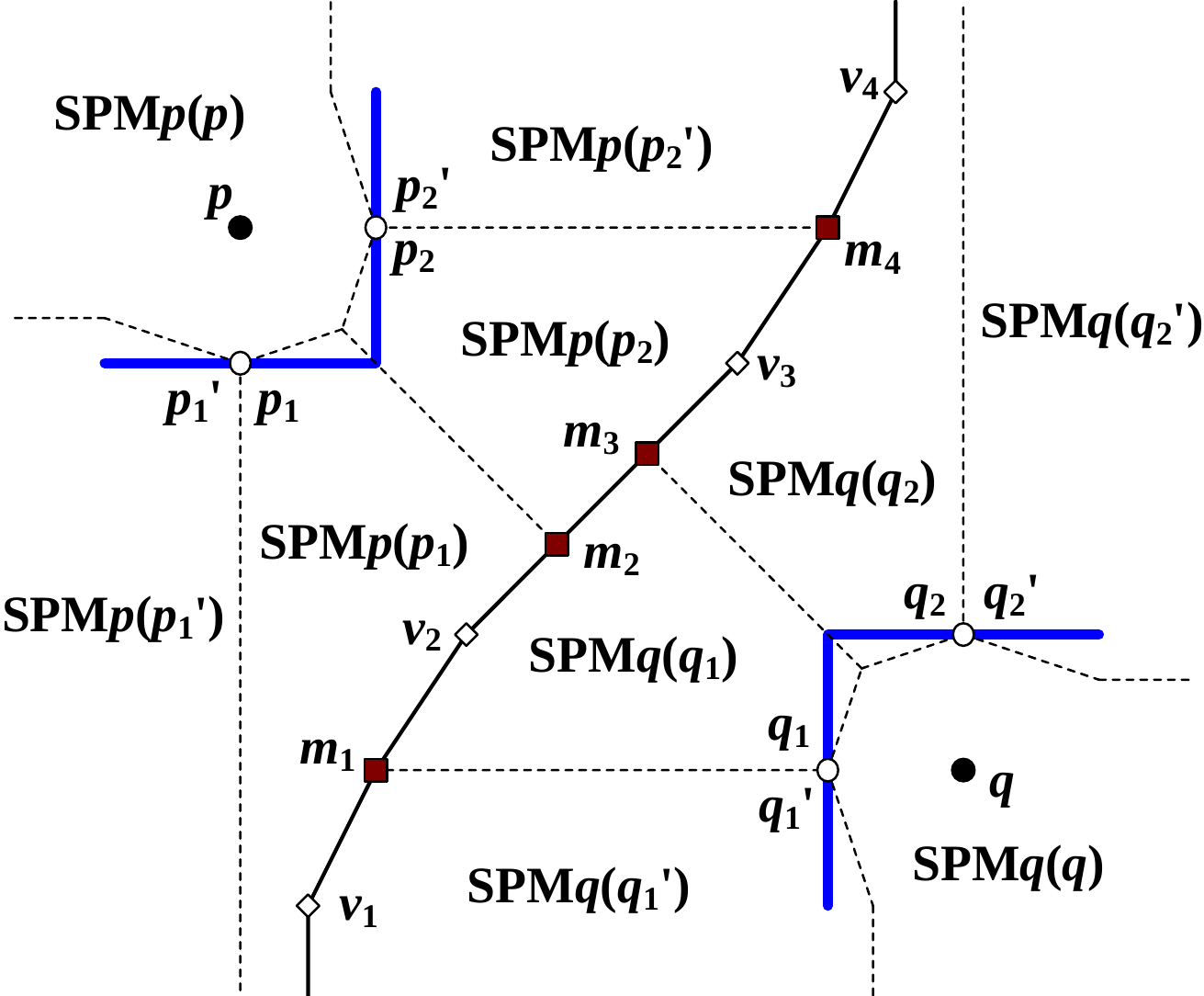}
\caption{ $B_C(p, q)$, where $m_i$ for $1\leq i\leq 4$ is a mixed vertices.}\label{fig-Bisectors}
\end{figure}
}

\begin{Definition}[Mixed Vertex]\label{def-mix-voronoi-vertex}
For two sites $p,q \in S$ and a Voronoi edge $e$ which is part of $B_C(p, q)$,
a point $r$ on $e$ is \emph{a mixed vertex}
if there are $p_1, p_2\in A(p)$ and $q_1\in A(q)$ such that
$r\in \SPM_p(p_1) \cap \SPM_p(p_2) \cap \SPM_q(q_1)$.
\end{Definition}

For instance,
Fig.~\ref{fig-Bisectors} shows a first-order city Voronoi diagram $V_1(\{p,q\})$, where the mixed vertices
are marked with a square and denoted by $m_1, \ldots, m_4$.
The vertex $m_2$ is a mixed vertex because it is in $\SPM_p(p_1)\cap \SPM_p(p_2)\cap \SPM_q(q_1)$.
Definition~\ref{def-mix-voronoi-vertex} yields the following.


\begin{figure}
\centering
\includegraphics[width=.5\textwidth]{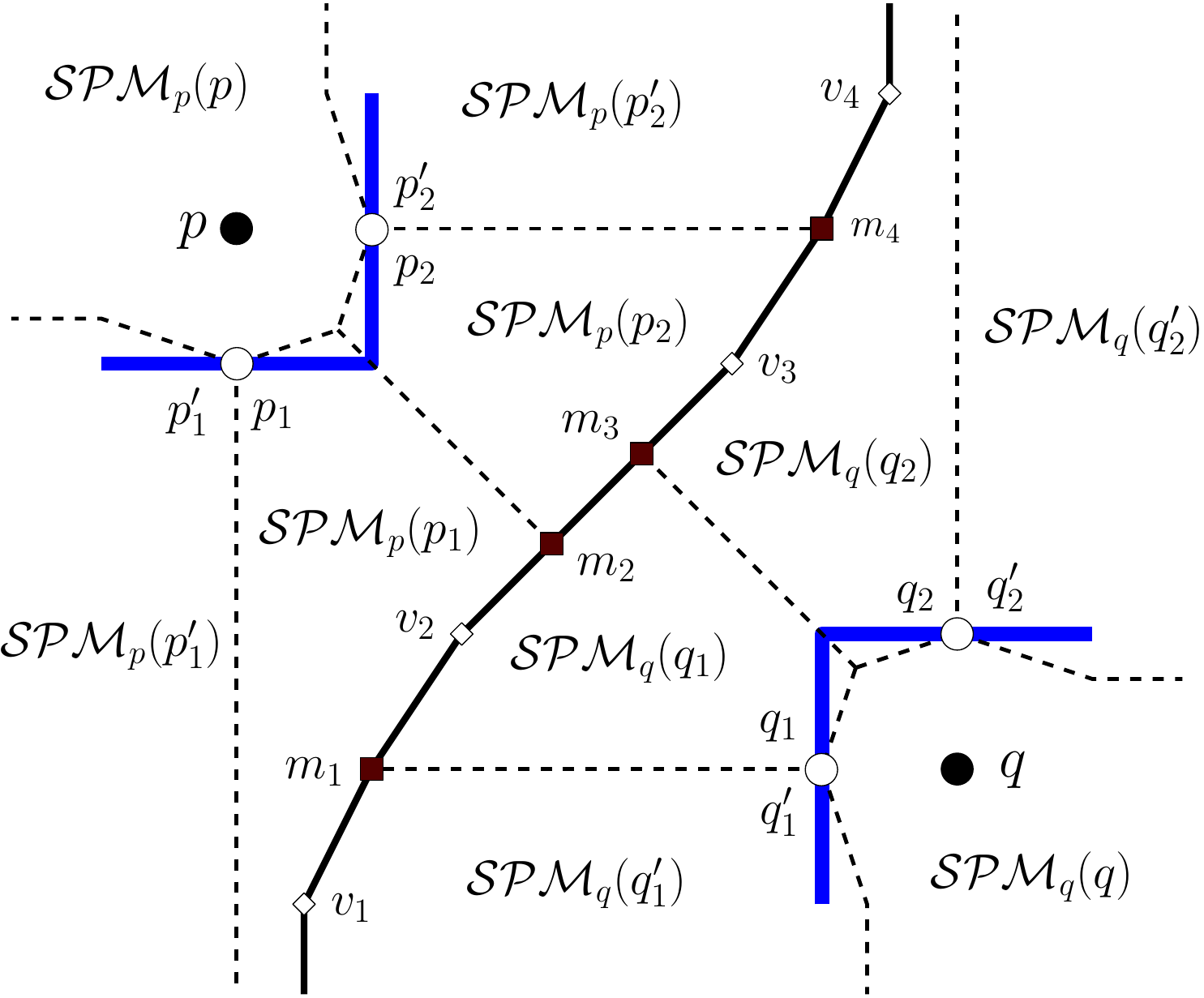}
\caption{ $B_C(p, q)$ (solid thin edge), where $m_1, \ldots, m_4$ are mixed vertices.}\label{fig-Bisectors}
\end{figure}

\newcommand{\lemmVvtext}{%
If a Voronoi edge $e$ contains $m\geq 0$ mixed vertices,
its complexity is $O(m+1)$.
}

\begin{Lemma}\label{lem-mVv}
\lemmVvtext
\end{Lemma}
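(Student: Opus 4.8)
The plan is to analyze a single Voronoi edge $e\subseteq B_C(p,q)$ by looking at how it is cut by the two shortest path maps $\SPM_p$ and $\SPM_q$. First I would recall that, by the definition of a mixed vertex, a point $r$ on $e$ is a mixed vertex exactly when $r$ lies on a boundary edge of $\SPM_p$ (i.e.\ $r$ changes its predecessor in $A(p)$) or on a boundary edge of $\SPM_q$. So the mixed vertices on $e$ are precisely the points where $e$ crosses from one region $\SPM_p(p_i)$ into another, or from one region $\SPM_q(q_j)$ into another. Between two consecutive such crossings, $e$ lies entirely inside a single region $\SPM_p(p_i)$ and a single region $\SPM_q(q_j)$; call such a sub-arc a \emph{piece} of $e$. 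If $e$ contains $m$ mixed vertices, then $e$ is divided into at most $m+1$ pieces.

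The heart of the argument is then the claim that each piece has constant complexity. On a piece, by the property of shortest path maps proved in \cite{BKC-09}, we have $d_C(p,r)=d_1(r,\eta_p(p_i))$ for all $r$ in the piece with a fixed needle $\eta_p(p_i)$, and similarly $d_C(q,r)=d_1(r,\eta_q(q_j))$ with a fixed needle $\eta_q(q_j)$. Hence the portion of the Voronoi edge inside the piece satisfies $d_1(r,\eta_p(p_i))=d_1(r,\eta_q(q_j))$, i.e.\ it is contained in the bisector $B_1(\eta_p(p_i),\eta_q(q_j))$ of two fixed needles. I would invoke the fact (implicit in the needle machinery of \cite{BKC-09}, and following from the $L_1$-type geometry: a needle generates a wavefront that is piecewise-linear with a bounded number of breakpoints) that the bisector of two needles has constant descriptive complexity. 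Intersecting this constant-complexity curve with the two regions $\SPM_p(p_i)$ and $\SPM_q(q_j)$ — which only clips it — still leaves a piece of constant complexity. Summing over the at most $m+1$ pieces gives total complexity $O(m+1)$, as claimed.

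The main obstacle I anticipate is making the "each piece has constant complexity" step fully rigorous. It is not quite enough to say "$B_1(\eta_p(p_i),\eta_q(q_j))$ has constant complexity"; one must check that restricting attention to the region of the plane where both needles are the relevant predecessors does not let the bisector re-enter and leave many times, which would break the constant bound even though the underlying curve is simple. I would handle this by arguing that within a single $\SPM$ region the relevant distance function is convex (being the $L_1$ distance from a fixed needle, accelerated along one fixed segment), so the two distance functions are each convex on the piece and their difference changes sign a bounded number of times; equivalently, $e\cap \SPM_p(p_i)\cap \SPM_q(q_j)$ is connected and of constant complexity. A secondary, more bookkeeping-type point is the $O(m+1)$ rather than $O(m)$ bound: when $m=0$ the edge $e$ is a single piece and still has constant complexity, which the "$+1$" absorbs; I would just state this boundary case explicitly rather than dwell on it.
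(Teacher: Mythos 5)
Your proposal is correct and follows essentially the same route as the paper: split $e$ at its mixed vertices into at most $m+1$ pieces, observe that on each piece both predecessors are fixed so the piece lies on the $L_1$ bisector $B_1(\eta_p(p_i),\eta_q(q_j))$ of two fixed needles, and use the constant complexity of such a needle--needle bisector (the paper cites \cite{BC-05} for this) to conclude the $O(m+1)$ bound. Your extra care about the bisector possibly re-entering a piece is a reasonable refinement, but it does not change the argument.
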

\begin{proof}
Suppose $e$ is part of a bisector $B_C(p, q)$, where $p, q\in S$.
Consider two consecutive mixed vertices $m_1$ and $m_2$ on $B_C(p, q)$,
where $m_1\in \SPM_p(p_1)\cap\SPM_q(q_1')\cap\SPM_q(q_1)$ and $m_2\in \SPM_p(p_1)\cap \SPM_p(p_2)\cap \SPM_q(q_1)$
(see Fig.~\ref{fig-Bisectors}).
Consider each point $v$ on $B_C(p,q)$ between $m_1$ and $m_2$.
Since $v$ belongs to $\SPM_p(p_1)\cap \SPM_q(q_1)$,
we have $d_C(v,p)=d_C(v,p_1)+d_C(p_1,p)=d_1(v,\eta_p(p_1))$ and $d_C(v,q)=d_C(v,q_1)+d_C(q_1,q)=d_1(v,\eta_q(q_1))$.
Together with $d_C(v,p)=d_C(v,q)$,
$v$ belongs to $B_1(\eta_p(p_1),\eta_q(q_1))$ (recall Fig~\ref{fig-needle-and-bisector}(c)).
As a result, if a Voronoi edge $e$ contains $m$ mixed Voronoi vertices,
$e$ consists of $m+1$ parts, each of which belongs to a bisector between two needles.
Since the complexity of an $L_1$ bisector between two needles is $O(1)$ \cite{BC-05},
the complexity of $e$ is $O(m + 1)$.
Note that the complexity of a bisector between two points in the city metric is $\Theta(c)$,
while the complexity of an $L_1$ bisector between two needles is $O(1)$.
\qed
\end{proof}

\newcommand{\lemmixuppertext}{%
An upper bound for the structural complexity of a \kthorder city Voronoi diagram $V_k(S)$
is $O(M + k(n-k))$, where $M$ is the total number of mixed vertices.
}

\begin{Lemma}\label{lem-mix-upper}
\lemmixuppertext
\end{Lemma}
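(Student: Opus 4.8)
The plan is to regard $V_k(S)$ as a planar subdivision and to split its vertices into two kinds. By our general-position assumption, every genuine Voronoi vertex (a point equidistant from three sites) has degree $3$; by Definition~\ref{def-mix-voronoi-vertex}, every mixed vertex has degree $2$ and lies in the relative interior of a single Voronoi edge. Write $V_3$ for the number of degree-$3$ vertices, $M$ for the number of mixed vertices, and $F$ for the number of faces (treating the unbounded face as one and, if needed, adding a large bounding frame so that Euler's formula applies). The structural complexity of $V_k(S)$ is the number of vertices and faces plus the total number of straight segments taken over all Voronoi edges, so it suffices to bound each of these quantities by $O(M + k(n-k))$.

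First I would handle the edge-segment count and the vertex count together. Cut every Voronoi edge at its mixed vertices, so that the arcs joining consecutive degree-$3$ vertices are exactly the objects to which Lemma~\ref{lem-mVv} applies: an arc carrying $m$ mixed vertices consists of $O(m+1)$ segments. Summing over all arcs, the total number of segments is $O(M + a')$, where $a'$ is the number of such arcs. Now contract every mixed vertex, merging its two incident arcs; this leaves a planar subdivision with $V_3$ vertices, all of degree $3$, with $a'$ edges and the same $F$ faces. Handshaking gives $2a' = 3V_3$, and Euler's formula $V_3 - a' + F = 2$ then forces $V_3 = 2F - 4$ and $a' = 3F - 6$. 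Hence the total number of vertices is $V_3 + M = O(F + M)$, the total number of segments is $O(F + M)$, and the whole structural complexity is $O(F + M)$.

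It remains to show that $F = O(k(n-k))$, i.e., that the order-$k$ city Voronoi diagram has, up to a constant factor, as many faces as an order-$k$ Voronoi diagram of $n$ point sites in the plane. I would obtain this from Lee's $O(k(n-k))$ bound~\cite{Lee-82} on the number of regions of an order-$k$ Voronoi diagram, re-established in the present setting through the wavefront reinterpretation of Lee's iterative construction in Section~\ref{sub-upper} (build $V_k(S)$ by refining every region of $V_{k-1}(S)$ with a nearest-site diagram of the sites not yet among its $k-1$ owners; the resulting face counts telescope). The main obstacle lies exactly here: unlike in the Euclidean case, a city Voronoi region $V_k(H,S)$ need not be connected, so a single subset $H$ may account for several faces, and one must verify that the iterative/wavefront argument counts connected components faithfully and still sums to $O(k(n-k))$. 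Granting $F = O(k(n-k))$, combining it with the reduction above yields the stated bound $O(M + k(n-k))$.
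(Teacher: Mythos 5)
Your reduction of the statement to a bound on the number of faces is fine in outline (cutting at mixed vertices so that Lemma~\ref{lem-mVv} applies arc by arc, then using Euler's formula and handshaking to get $V_3+a'=O(F)$), but the proof as written has a genuine gap exactly where you say it does: the bound $F=O(k(n-k))$ is never established. You sketch that it should follow from Lee's bound via the iterative/wavefront construction, note that city Voronoi regions need not be connected so that one subset $H$ may contribute several faces, and then proceed ``granting'' the bound. Since the entire conclusion $O(M+k(n-k))$ hinges on that quantity, this is not a technicality you may defer; without it the argument only shows the complexity is $O(M+F)$ for an uncontrolled $F$.

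The paper avoids this issue by taking the needed combinatorial input directly from \cite{Lee-82}: the number of Voronoi regions \emph{and} the number of Voronoi edges of the order-$k$ diagram is $O(k(n-k))$ in any metric satisfying the triangle inequality, which covers the city metric. With the edge count available outright, no Euler/contraction detour is needed -- one simply applies Lemma~\ref{lem-mVv} to each of the $O(k(n-k))$ edges and sums, giving $\sum_{e\in E}O(m_e+1)=O(M+|E|)=O(M+k(n-k))$. So your route is essentially a longer reconstruction of the edge bound from a face bound that you would still have to import from (or re-prove in the spirit of) the same reference; if you want to keep your Euler argument, you must either cite Lee's region/edge bound for metrics satisfying the triangle inequality explicitly, or actually carry out the verification you flag about disconnected regions (and also handle the degenerate features your contraction argument silently assumes away, e.g.\ Voronoi edges that are closed curves carrying no degree-$3$ vertex, and the fact that the subdivision need not be connected, so $V-E+F=1+C$ rather than $2$).
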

\begin{proof}
Lee \cite{Lee-82} proved that the number of Voronoi regions in the \kthorder Voronoi diagram is $O(k(n-k))$
in any distance metric satisfying the triangle inequality, and so is the number of Voronoi edges.
By Lemma~\ref{lem-mVv}, if a Voronoi edge $e$ contains $m_e$ mixed Voronoi vertices, its complexity is $O(m_e + 1)$.
Suppose a city Voronoi diagram $V_k(S)$ contains a set $E$ of Voronoi edges,
and each edge $e\in E$ contains $m_e$ mixed Voronoi vertices.
Then, the complexity of all edges, i.e., the structural complexity of $V_k(S)$,
is $\sum_{e\in E} O(m_e + 1) = O(M + |E|)$.
Since $|E|=O(k(n-k))$, it follows that
$O(M + |E|) = O(M + k(n-k))$.
\qed
\end{proof}

For the proof in Section~\ref{sub-upper},
we further categorize the mixed vertices.
Let $m$ be a mixed vertex on the Voronoi edge between $V_k(H_1,S)$ and $V_k(H_2,S)$,
where $H_1\setminus H_2=\{p\}$ and $H_2\setminus H_1=\{q\}$.
We call $m$  an \emph{interior mixed vertex} of $V_k(H_1,S)$
if $m\in \SPM_p(p_1)\cap \SPM_p(p_2)\cap \SPM_q(q_1)$, for some $p_1, p_2 \in A(p)$ and $q_1\in A(q)$;
otherwise, we call $m$ an \emph{exterior mixed vertex} of $V_k(H_1,S)$.
For example, in Fig.\ref{fig-Bisectors}
the vertices $m_2$ and $m_4$ both are interior mixed vertices of $V_1(\{p\}, \{p,q\})$ and exterior mixed vertices of $V_1(\{q\}, \{p,q\})$.

\subsection{Upper Bound}\label{sub-upper}
Throughout this subsection,
we consider a Voronoi region $V_j(H, S)$ of a $j^{\mathrm{th}}$-order Voronoi diagram $V_{j}(S)$,
where $H\subset S$ and $|H|=j$.
Let $V_j(H, S)$ have $h_H$ adjacent Voronoi regions $V_j(H_i, S)$ for $1\leq i\leq h_H$.
Note that the subsets $H_i$ and $H$ differ in exactly one element~\cite{Lee-82}. In the following let
$H_i \setminus H=\{q_i\}$, $Q = \{q_1, \ldots, q_{h_H}\}$, and $\ell_H=|Q|$.

Lee \cite{Lee-82} proved that in any distance metric satisfying the triangle inequality,
$V_j(H, S)\cap V_1(Q) = V_j(H, S)\cap V_{j+1}(S)$,
and thus computing $V_1(Q)$ for all the Voronoi regions $V_j(H, S)$ of $V_j(S)$ yields $V_{j+1}(S)$,
leading to an iterative construction for $V_k(S)$ for any $k< n$.
Fig.~\ref{fig-Euclidean} illustrates this iteration technique for the Euclidean metric:
solid segments form $V_1(H, S)$ and dashed segments form $V_1(Q)$.
Since the gray region is part of $V_1(\{p\}, S)$ and also part of $V_1(\{q_1\}, Q)$,
all points in the gray region share the same two nearest sites $p$ and $q_1$,
implying that the gray region is part of $V_2(\{p, q_1\}, S)$.

\begin{figure}[t]
\begin{center}
\begin{minipage}[b]{0.5\textwidth}
 \centering
 \includegraphics[width=0.7\textwidth]{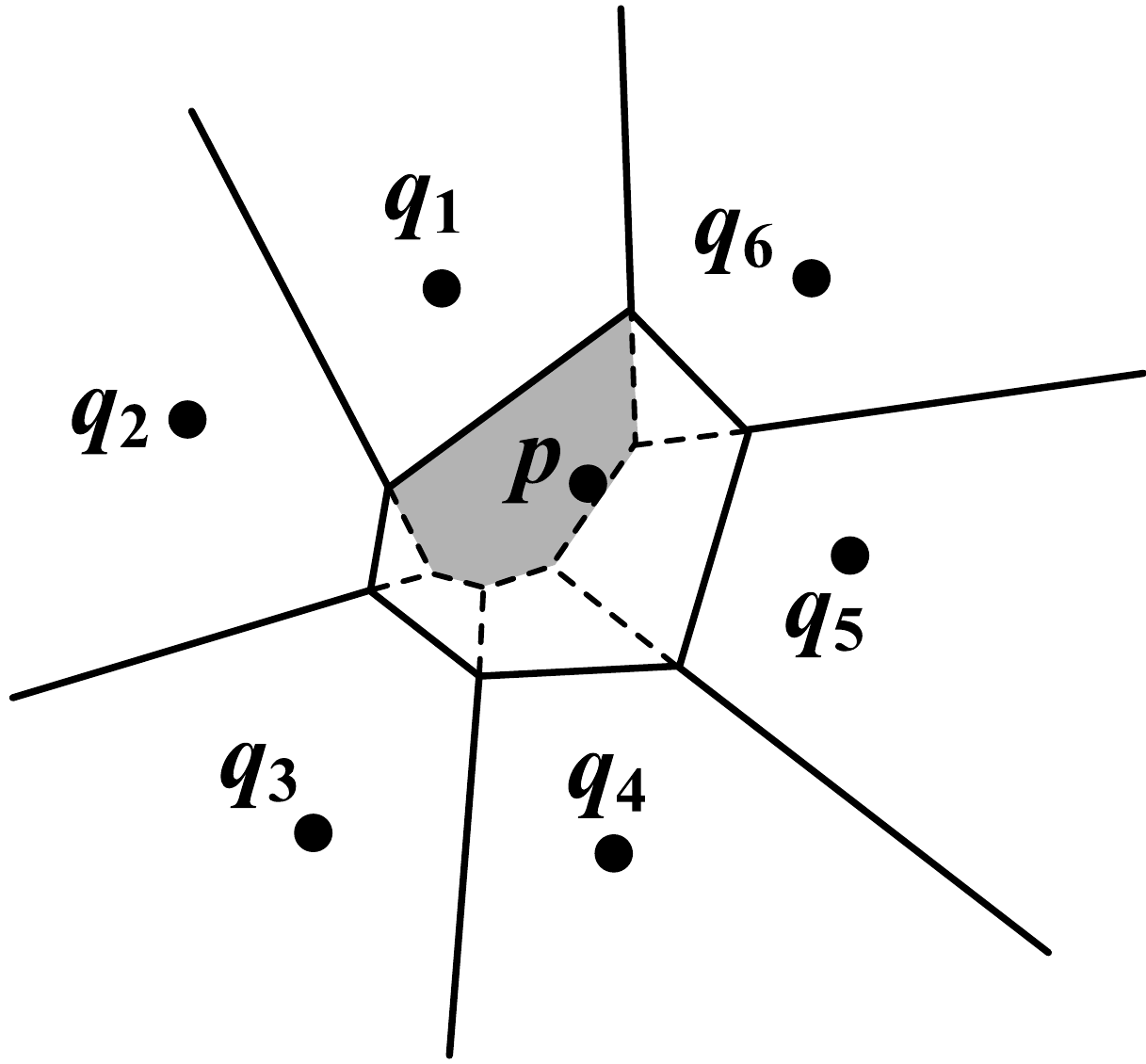}
 \caption{$V_1(H, S)\cap V_1(Q) = V_1(H, S)\cap V_{2}(S)$
 where $H=\{p\}$, $Q=\bigcup_{1\leq i\leq 6} \{q_i\}$, and $S=H\cup Q$.}
 \label{fig-Euclidean}
\end{minipage}
\hfill
\begin{minipage}[b]{0.43\textwidth}
 \centering
 \includegraphics[width=0.7\textwidth]{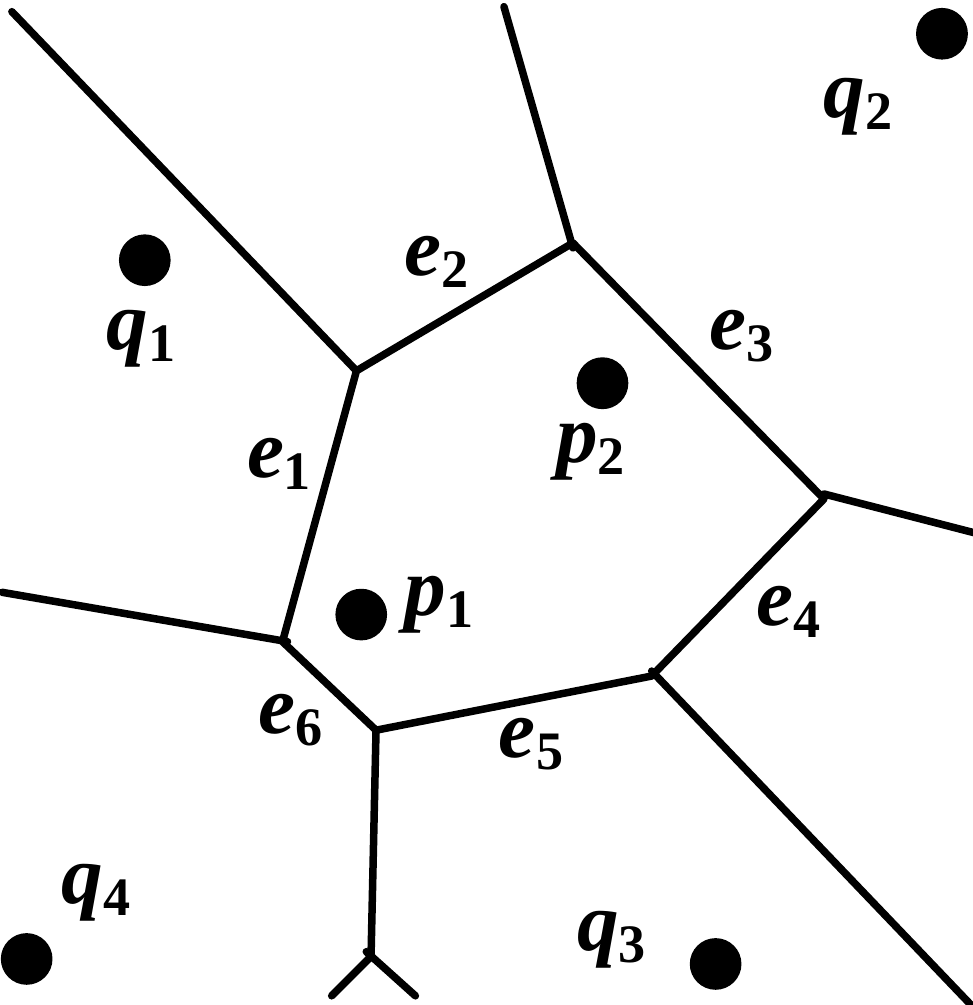}
 \caption{$V_2(H,S)$ where $H=\{p_1, p_2\}$, $Q=\{q_1,q_2,q_3,q_4\}$, and $S=H\cup Q$.  }
 \label{fig-cardinality}
\end{minipage}
\end{center}
\end{figure}

We adopt wavefront propagation to interpret this iterative construction
in a new way, which will lead to the main proof of this section.
Let us imagine that a wavefront is
propagated from each site $q\in Q$ into the Voronoi region $V_j(H,S)$.
If a point $r\in\R^2$ is first touched by the wavefront that propagated from $q$,
$r$ belongs to $V_{j+1}(H\cup\{q\}, S)$.

Note that when $j\geq 2$, $|Q|$ is not necessarily the number of adjacent regions,
i.e., $\ell_H\leq h_H$.
Fig.~\ref{fig-cardinality} illustrates an example for the Euclidean metric:
$V_2(H,S)$ has 6 adjacent Voronoi regions but $|Q|=\ell_H$ is only 4.
This is because for a site $q\in Q$,
$B_2(\{q\}, H)\cap V_j(H,S)$ may consist of more than one Voronoi edge,
where $B_2(\{q\}, H)$ is a Euclidean bisector between $\{q\}$ and $H$ (similar to $B_C(H_1, H_2)$ defined in Section~\ref{sec-prelim}).
For instance, as shown in Fig.~\ref{fig-cardinality},
$e_{q_1}=B_2(\{q_1\},H)\cap V_2(H,S)$ consists of two Voronoi edges $e_1$ and $e_2$

Now we transfer our new interpretation to the city metric.
Let $e_q$ be $B_C(\{q\},H)\cap V_j(H,S)$ for some site $q \in Q$.
If $e_q$ contains $m_q$ \textbf{exterior} mixed vertices with respect to $V_j(H, S)$,
$e_q$ intersects $m_q+1$ regions in ${\cal SPM}_{q}$.
We denote these regions by $\SPM_q(v_z)$ for $1 \leq z \leq m_q + 1$. Note that all $v_z$ must be in $A(q)$. Then,
instead of propagating a single wavefront from $q$ into $V_j(H, S)$ (as in the Euclidean metric), we propagate
$m_q+1$ wavefronts, namely one from each $\eta_{q}(v_z)$ into $V_j(H, S)$.

As a result, if $V_j(H,S)$ contains $m_H$ exterior mixed vertices,
$m_H+\ell_H$ wavefronts will be propagated into $V_j(H,S)$.
During the process, when a point $r\in V_j(H,S)$ is \emph{first} touched by a wavefront
propagated from $\eta_{q}(v), q\in Q \mbox{ and } v\in A(q)$,
we propagate a new wavefront from $\eta_q(r)$, i.e., an activation event occurs,
if (i) $r\in \P(V_C)\cup V_C$
or (ii) $v=q$ and $r\in P(q)$.
These two conditions amount to $r \in A(q)\setminus\{q\}$,
but this classification will help us to derive the number of mixed vertices.
This is due to the fact that during the $k-1$ iterations for computing $V_{j+1}(S)$ from $V_j(S)$ for $1\leq j\leq k-1$,
$\P(V_C)$ contributes $O(kc)$ activation events, but $\P(S)$ only contributes $O(n)$.
\chihhung{One problem is that Lemma~\ref{lem-mix-vertices} has be moved to the appendix.}

\newcommand{\lemmixregiontext}{%
If $V_j(H, S)$ contains $m_H$ exterior mixed vertices,
then $V_j(H,S)\cap V_{j+1}(S)$ contains at most $m_H+2c_H+2a_H$ mixed vertices,
where $c_H=|(\P(V_C)\cup V_C)\cap V_j(H, S)|$
and $a_H$ is the number of activation events associated with points in $\P(S)$.
}

\begin{Lemma}\label{lem-mix-region}
\lemmixregiontext
\end{Lemma}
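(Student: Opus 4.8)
The plan is to attribute every mixed vertex of $V_j(H,S)\cap V_{j+1}(S)$ to a single site of $Q$, and then bound each site's contribution by an Euler-type count on a shortest path map. Write $R_q = V_{j+1}(H\cup\{q\},S)\cap V_j(H,S)$; by Lee's identity this equals $V_1(\{q\},Q)\cap V_j(H,S)$, so the Voronoi edges of $V_{j+1}(S)$ lying in the interior of $V_j(H,S)$ are exactly the pieces $B_C(q,q')\cap V_j(H,S)$ separating $R_q$ from $R_{q'}$ for $q,q'\in Q$, and in general position a boundary edge of $V_j(H,S)$ is not an edge of $V_{j+1}(S)$, so these account for all mixed vertices to be counted. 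A mixed vertex $r$ on $B_C(q,q')$ is, in general position, an interior mixed vertex of exactly one of $V_{j+1}(H\cup\{q\},S)$ and $V_{j+1}(H\cup\{q'\},S)$, namely the one of $q,q'$ whose shortest path map changes its cell at $r$; I attribute $r$ to that site. Thus it suffices to bound, for each $q\in Q$, the number $t_q$ of points on $\gamma_q:=\partial R_q\setminus\partial V_j(H,S)$ at which the $\SPM_q$-cell changes, and to prove $\sum_{q\in Q} t_q\le m_H+2c_H+2a_H$.

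First I would fix $q$ and apply Euler's formula to the subdivision of $R_q$ induced by $\SPM_q$. Assuming for the moment that $R_q$ is a topological disk, let $n_q$ be the number of $\SPM_q$-cells meeting $R_q$; the interior vertices of the subdivision are degree-$3$ $\SPM_q$-vertices (say $s_q$ of them), and $\partial R_q$ carries $b_q$ points where an $\SPM_q$-edge meets it transversally, besides plain corners. Euler's formula then yields $s_q+b_q=2n_q-2$. The key observation is that $\partial R_q$ splits as $\beta_q\cup\gamma_q$ with $\beta_q=\partial R_q\cap\partial V_j(H,S)=e_q$, and, by Definition~\ref{def-mix-voronoi-vertex}, the $\SPM_q$-edge incidences on $e_q$ are exactly the $m_q$ exterior mixed vertices of $V_j(H,S)$ lying on $e_q$, while those on $\gamma_q$ are exactly the $t_q$ points being counted. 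Hence $t_q=b_q-m_q\le 2n_q-2-m_q$.

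Next I would bound $n_q$ via the wavefront interpretation. Into $V_j(H,S)$ we initially send the $m_q+1$ wavefronts from the needles whose cells $\SPM_q(v_z)$ meet $e_q$, and every further $q$-needle that ever wins a point of $V_j(H,S)$ is created by an activation event inside $V_j(H,S)$; writing $A_q$ for the number of such events (those spawning needles of $q$), and using that the activation rule (i)--(ii) still realizes all $\SPM_q$-cells that occur inside $R_q$, we get $n_q\le(m_q+1)+A_q$. Summing over $Q$ and using $t_q\le 2n_q-2-m_q$,
\[
\sum_{q\in Q} t_q\;\le\;\sum_{q\in Q}\bigl(2(m_q+1+A_q)-2-m_q\bigr)\;=\;\sum_{q\in Q}\bigl(m_q+2A_q\bigr)\;=\;m_H+2\sum_{q\in Q}A_q .
\]
Finally, each activation event spawns needles of one site only, every point of $(\P(V_C)\cup V_C)\cap V_j(H,S)$ hosts at most one (type-(i)) event, so these contribute at most $c_H$ in total, and the type-(ii) events are precisely the $a_H$ events associated with points of $\P(S)$; therefore $\sum_{q\in Q} A_q\le c_H+a_H$, and the bound $m_H+2c_H+2a_H$ follows.

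The delicate points are two. Topologically, $R_q$ need not be a disk: in the city metric a Voronoi region can be disconnected and a bisector can be a closed curve, so $R_q$ may split into several components and may have interior holes; one runs Euler componentwise (which only strengthens the ``$-2$''), and for an interior hole one notes that its boundary is again a closed chain of edges $B_C(q,q')$, so the $\SPM_q$-cell changes an even number of times along it and these changes are still counted in $t_q$, paying for the extra Euler term the hole contributes, so the inequality is unaffected. The more substantial issue, and the main obstacle, is justifying $n_q\le(m_q+1)+A_q$: one must show that the restricted activation rule misses no $\SPM_q$-cell occurring in $R_q$, i.e.\ that a branching of a non-base $q$-wavefront at a point of $P(q)$ --- or at any $A(q)$-point outside $\P(V_C)\cup V_C$ --- is redundant inside $V_j(H,S)$ because the relevant acceleration onto the network is already captured by an earlier branching (at a network vertex, or at $q$ itself). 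This is exactly the phenomenon that makes $\P(S)$ contribute only $O(n)$, and $\P(V_C)$ contribute $O(kc)$, activation events over the whole construction.
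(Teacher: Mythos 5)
Your overall strategy is the same as the paper's -- simulate the restricted wavefront propagation into $V_j(H,S)$, charge new wavefronts to the at most $c_H$ type-(i) and $a_H$ type-(ii) activation events, and convert a face/wavefront count into a vertex count via Euler's formula -- but you organize it per site ($\SPM_q$ restricted to $R_q$, with each mixed vertex attributed to the one site whose shortest path map changes there), whereas the paper does a single global count: it views the partition of $V_j(H,S)$ by all $m_H+\ell_H+c_H+a_H$ propagated wavefronts as one diagram $V_1(W)$, applies Euler's formula once to get $m_H+\ell_H+2c_H+2a_H-2$ vertices, observes that $V_{j+1}(S)\cap V_j(H,S)$ is a subgraph of $V_1(W)$, and subtracts the $\ell_H-2$ genuine Voronoi vertices. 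Your arithmetic ($t_q\le 2n_q-2-m_q$, $n_q\le m_q+1+A_q$, $\sum_q m_q=m_H$, $\sum_q A_q\le c_H+a_H$) does reproduce the stated bound.

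However, the step you yourself flag as ``the main obstacle'' -- that $n_q\le(m_q+1)+A_q$, i.e.\ that every $\SPM_q$-cell occurring in $R_q$ is either one of the $m_q+1$ cells crossing $e_q$ or belongs to a needle activated by rule (i)/(ii) inside $V_j(H,S)$ -- is precisely the content-bearing part of the lemma, and leaving it unproven leaves the proof incomplete: without it the whole charge to $c_H$ and $a_H$ is unsupported. The way to close it is the dichotomy the paper uses in the correctness proof of Lemma~\ref{lem-compute-region}: if $\SPM_q(v)\cap V_1(\{q\},Q)\cap V_j(H,S)\neq\emptyset$, take a quickest path from $q$ to a point $r$ in that intersection; for any $x$ on it, $d_C(q,x)=d_C(q,r)-d_C(x,r)\le d_C(s,r)-d_C(x,r)\le d_C(s,x)$ for every $s\in S\setminus H$, so the last crossing of $\partial V_j(H,S)$ lies on $e_q$; hence either $\SPM_q(v)$ meets $e_q$ (so $\eta_q(v)$ is an initial needle) or $v\in V_j(H,S)$, where it is activated (type (i) if $v\in V_C\cup\P(V_C)$; if $v\in P(q)$ one argues that a cell of a $P(q)$-entry is only nonempty when $v$ is reached first by the base wavefront of $q$, which is rule (ii)). A second, smaller weakness is your treatment of disconnected $R_q$: running Euler componentwise counts an $\SPM_q$-cell once per component it meets, so the componentwise sum of cell counts may exceed $m_q+1+A_q$ and the inequality does not follow as stated; the paper's global count over $V_1(W)$ (whose number of regions is by construction the number of propagated wavefronts) avoids exactly this per-site bookkeeping.
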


\begin{proof}
According to the above discussion,
we propagate $m_H+\ell_H$ wavefronts into $V_j(H, S)$.
All those wavefronts combined generate at most $c_H$ new wavefronts from points in $\P(V_C)\cap V_j(H_, S)$,
and $a_H$ new wavefronts from points in $\P(S)\cap V_j(H, S)$.
Note that $c_H=|(\P(V_C)\cup V_C)\cap V_j(H, S)|$ (condition~(i)) but $a_H\leq |\P(S)\cap V_j(H, S)|$ (condition~(ii))
\chihhung{This sentence can be removed.}.
\andreas{ok}
\chihhung{maybe move condition~(i) and condition~(ii) to the statement of this lemma.}
Let $W$ be the set of the $m_H+\ell_H+c_H+a_H$ wavefronts.
For each point $r\in V_j(H, S)$, if $r$ is first touched by a wavefront $w \in W$ it is associated with $w$.
This will partition $V_j(H, S)$ into $m_H+\ell_H+c_H+a_H$  regions.
We view those regions as a special Voronoi diagram $V_1(W)$.
Note that $m_H+\ell_H$ of those regions are unbounded.

$V_j(H,S)\cap V_{j+1}(S)$ is a subgraph of $V_1(W)$
since if a point $r\in V_j(H,S)$ is first touched by a wavefront in $W$ propagated from $\eta_q(v)$
\chihhung{or a point $r\in V_j(H,S)$ belongs to $V_1(\eta_q(v),W)$},
$r$ belongs to $V_{j+1}(H\cup\{q\},S)$.
Without loss of generality,
we assume every vertex of $V_1(W)$ has degree~3.
According to Euler's formula it holds that
$N_V=2(N_R-1)-N_U$,
where $N_V$, $N_R$ and $N_U$
are the numbers of vertices, regions, and unbounded regions, respectively.
Since $V_1(W)$ contains $m_H+\ell_H$ unbounded regions and $m_H+\ell_H+c_H+a_H$ bounded regions,
$V_1(W)$ contains $m_H+\ell_H+2c_H+2a_H-2$ vertices.
By \cite{Lee-82},
since $|Q|=\ell_H$,
there are $\ell_H-2$ Voronoi vertices in $V_{j+1}(S)\cap V_j(H,S)$.
Therefore, $V_j(H,S)\cap V_{j+1}(S)$ contains at most $(m_H+\ell_H+2c_H+2a_H-2)-(\ell_H-2)=m_H+2c_H+2a_H$ mixed vertices.
\qed
\end{proof}

\deleted{
By applying Lemma~\ref{lem-mix-region} to each region of $V_j(S)$,
we obtain Lemma~\ref{lem-mix-diagram}. Lemma~\ref{lem-mix-diagram} indicates
a recursive formula: $m_{j+1}=m_j+O(c)+2a_j$, and thus leads to Lemma~\ref{lem-mix-vertices}.
Lemma~\ref{lem-mix-upper} and Lemma~\ref{lem-mix-vertices} gives an upper bound
for the structural complexity of $V_k(S)$ in Theorem~\ref{thm-upper}.
}

Applying Lemma~\ref{lem-mix-region} to each region of $V_j(S)$, yields a recursive formula for the total number of mixed vertices $m_{j+1}$ in $V_{j+1}(S)$: $m_{j+1}=m_j+O(c)+a_j$ (see Lemma~\ref{lem-mix-diagram}). In Lemma~\ref{lem-mix-vertices} we show that this formula can be bounded by $O(n+kc)$ for $k$ iterations of this iterative approach. Finally, in Theorem~\ref{thm-upper} we combine the insights of Lemma~\ref{lem-mix-upper} and Lemma~\ref{lem-mix-vertices} to give an upper bound for the structural complexity of $V_k(S)$. 

\newcommand{\lemmixdiagramtext}{%
$V_{j+1}(S)$ contains $m_j+O(c)+2a_j$ mixed vertices
where $m_j$ is the number of mixed vertices of $V_{j}(S)$ and
$a_j$ is the number of activation events associated with points in $\P(S)$ during the computation of $V_{j+1}(S)$ from $V_j(S)$.
}

\begin{Lemma}\label{lem-mix-diagram}
\lemmixdiagramtext
\end{Lemma}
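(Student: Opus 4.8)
The plan is to sum the local bound of Lemma~\ref{lem-mix-region} over all Voronoi regions of $V_j(S)$. Recall from Section~\ref{sub-upper} that restricting $V_{j+1}(S)$ to a region $V_j(H,S)$ yields exactly $V_j(H,S)\cap V_1(Q)$ for the associated candidate set $Q$, and note that a Voronoi edge of $V_j(S)$ separating $V_j(H_1,S)$ from $V_j(H_2,S)$, with $H_1\setminus H_2=\{p\}$ and $H_2\setminus H_1=\{q\}$, is generically in the interior of a single region of $V_{j+1}(S)$ (both sides have $(j+1)$-nearest set $H_1\cap H_2\cup\{p,q\}$), and hence carries no vertex of $V_{j+1}(S)$. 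So every mixed vertex of $V_{j+1}(S)$ lies in the relative interior of exactly one region $V_j(H,S)$, and summing over the regions of $V_j(S)$ and applying Lemma~\ref{lem-mix-region} gives $m_{j+1}\le\sum_H(m_H+2c_H+2a_H)$, where $m_H$ counts the exterior mixed vertices of $V_j(H,S)$, $c_H=|(\P(V_C)\cup V_C)\cap V_j(H,S)|$, and $a_H$ counts the activation events in $V_j(H,S)$ associated with points of $\P(S)$. The inequality survives even if one tolerates a little double counting across region boundaries, which is all that is needed.

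Next I would bound the three sums separately. For $\sum_H m_H$: a mixed vertex of $V_j(S)$ lies in the relative interior of one Voronoi edge of $V_j(S)$, hence on the boundary of exactly two regions, whose differing sites are some $p$ and $q$; by Definition~\ref{def-mix-voronoi-vertex} the shortest path map of one of $p,q$ must be subdivided at it, making it an interior mixed vertex of that region, so it is an exterior mixed vertex of at most one region. Therefore $\sum_H m_H\le m_j$. For $\sum_H c_H$: the regions $V_j(H,S)$ tile the plane and each point of $\P(V_C)\cup V_C$ lies in $O(1)$ of them, so $\sum_H c_H=O(|\P(V_C)\cup V_C|)=O(c)$, using $|V_C|=c$ and $|P(v)|\le 4$ for every vertex $v$. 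For $\sum_H a_H$: by definition this equals the total number of activation events associated with points of $\P(S)$ during the construction of $V_{j+1}(S)$ from $V_j(S)$, that is $a_j$. Combining the three estimates yields $m_{j+1}\le m_j+O(c)+2a_j$.

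The step that requires the most care is the first one: making sure that the mixed vertices of $V_{j+1}(S)$ are neither lost on, nor spuriously created along, the Voronoi edges of $V_j(S)$, so that Lemma~\ref{lem-mix-region} truly accounts for all of them region by region. The observation that a Voronoi edge of $V_j(S)$ lies in the interior of a single region of $V_{j+1}(S)$ settles this; after that, the argument is just routine summation together with the elementary facts $|V_C|=c$, $|P(v)|\le 4$, and the defining identity $a_j=\sum_H a_H$.
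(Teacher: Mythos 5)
Your proof is correct and takes essentially the same route as the paper: apply Lemma~\ref{lem-mix-region} to each region of $V_j(S)$ and sum, using $\sum_H m_H\le m_j$, $\sum_H c_H=O(|\P(V_C)\cup V_C|)=O(c)$, and $\sum_H a_H=a_j$. Your extra checks (that Voronoi edges of $V_j(S)$ lie in the interior of regions of $V_{j+1}(S)$, so the region-by-region count misses no mixed vertex, and that each mixed vertex of $V_j(S)$ is exterior for at most one region) are points the paper leaves implicit and only make the argument more complete.
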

\begin{proof}

For a Voronoi region $V_j(H, S)$,
let $m_H$ be the number of its exterior mixed vertices,
let $c_H$ be $|V_j(H, S) \cap (\P(V_C)\cup V_C)|$ and let $a_H$ be number of activation events associated
with vertices in $\P(S)\cap V_j(H, S)$ during the computation.
If $V_j(H, S)$ is empty, $m_H=c_H=a_H=0$.
By Lemma~\ref{lem-mix-region},
$V_j(H, S)\cap V_{j+1}(S)$ contains at most $m_H + 2c_H + 2a_H$ mixed Voronoi vertices.
Therefore, the total number of mixed vertices of $V_{j+1}(S)$ is bounded by:
\[\sum_{H\in S, |H|=j}( m_H+2c_H+2a_H )= m_j+2|\P(V_C)|+2a_j.\]
\qed

\end{proof}

\newcommand{\lemmixverticestext}
{
The number of mixed vertices of $V_k(S)$ is $O(n+kc)$.
}

\begin{Lemma}\label{lem-mix-vertices}
\lemmixverticestext
\end{Lemma}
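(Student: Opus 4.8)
The plan is to unroll the recurrence from Lemma~\ref{lem-mix-diagram} and bound the two contributions separately: the $O(c)$ term per iteration, which accumulates to $O(kc)$ over $k$ iterations, and the activation events $a_j$ associated with $\P(S)$, whose total $\sum_{j=1}^{k-1} a_j$ I claim is $O(n)$ rather than $O(kn)$. Concretely, I would start from $m_1$, the number of mixed vertices of $V_1(S)$, which is $O(n+c)$ since the first-order city Voronoi diagram has complexity $O(n+c)$ \cite{AAP-04}. Then Lemma~\ref{lem-mix-diagram} gives $m_{j+1} \le m_j + O(c) + 2a_j$, so by telescoping $m_k \le m_1 + O(kc) + 2\sum_{j=1}^{k-1} a_j$.

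The crux is therefore the bound $\sum_{j=1}^{k-1} a_j = O(n)$. Recall that $a_j$ counts activation events triggered at points of $\P(S)$ during the step from $V_j(S)$ to $V_{j+1}(S)$; by condition~(ii) in the setup, such an event occurs at a point $r \in \P(S)$ only when the wavefront reaching $r$ first was propagated from $\eta_q(q)$ with $q \in Q$ and $r \in P(q)$ — that is, $r$ is one of the (at most four) isothetic projection points of the site $q$ itself. The key structural observation I would establish is that a given point $r \in P(q)$ can serve as such an activation point in only $O(1)$ of the $k-1$ iterations: once $q$ has ``caught up'' to its own projection point $r$ in some region $V_j(H,S)$ with $q \notin H$, in all later orders $j' > j$ either $q \in H'$ (so no wavefront is propagated from $q$ into that region) or the region containing $r$ has already been claimed, so the charge to the pair $(q,r)$ is bounded by a constant. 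Since $|\P(S)| = O(n)$ (each of the $n$ sites contributes at most four projection points), summing over all pairs gives $\sum_j a_j = O(n)$.

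The main obstacle I anticipate is making precise and rigorous the ``charged only $O(1)$ times'' claim. The subtlety is that as $j$ increases, the Voronoi region containing a fixed point $r$ changes, and one must argue about which subsets $H$ of size $j$ have $r \in V_j(H,S)$ while $q \notin H$ and $q$ is adjacent (contributing a wavefront into that region). A clean way to handle this may be to track, for each site $q$ and each projection point $r \in P(q)$, the sequence of orders at which an activation event fires at $r$ due to $q$, and to observe that between two consecutive such firings the order $j$ must strictly increase \emph{and} that the second firing can only happen if $q$ has been ``pushed out'' — but once pushed out relative to the cell of $r$, it stays out, because membership of $q$ among the $k$ nearest sites of $r$ is monotone in $k$. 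This monotonicity of $k$-nearest-site membership is the lever I would lean on. If that argument is too delicate, a fallback is a weaker but sufficient count: bound $a_j$ region-by-region by the number of cells of $V_j(S)$ whose defining set fails to contain the relevant $q$, and argue that $r \in \P(S)$ lies in an activation-triggering position in at most a constant number of these across all $j$ by a direct geometric argument on the wavefront arrival order at $r$.

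Once $\sum_{j} a_j = O(n)$ is in hand, the conclusion is immediate: $m_k \le m_1 + O(kc) + O(n) = O(n+c) + O(kc) + O(n) = O(n + kc)$, which is the claimed bound.
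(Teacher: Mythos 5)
Your proposal follows essentially the same route as the paper: unroll the recurrence of Lemma~\ref{lem-mix-diagram} starting from $m_1=O(n+c)$, and bound $\sum_{j}a_j=O(n)$ by charging each activation at a point $r\in P(q)$ to the pair $(q,r)$ at most once, since the first such activation means $q$'s wavefront reached $r$ first, making $q$ the $(j+1)^{\mathrm{st}}$ nearest site of $r$, after which $q$ stays in the defining set of $r$'s region and never propagates into it again. The monotonicity argument you hedge about is exactly the paper's (one-line) justification, so no genuinely different ideas or gaps are involved.
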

\begin{proof}
\andreas{to me: proof lacks structure}
\chihhung{I need your help to organize the structure.}
Let $m_j$ be the total number of mixed Voronoi vertices of $V_{j}(S)$
and let $a_j$ be the number of activation events associated with vertices in $\P(S)$ during the computation of $V_{j+1}(S)$ from $V_{j}(S)$, described by our algorithm. Then, by Lemma~\ref{lem-mix-diagram} the following holds:
\[m_k=m_{k-1}+O(c)+a_{k-1}=\cdots=m_1+O(kc)+2\sum_{1\leq j\leq k-1}a_j.\]

Now, we show an upper bound for the complexity of $\sum_{1\leq j\leq k-1}a_j$.
For a vertex $v\in P(q)$ where $q \in S$,
let the $j^{\mathrm{th}}$ iteration be the first time when $v$ is activated by a wavefront propagated from $q$,
i.e. $\eta_q(v)$ will propagate a wavefront,
and let $v$ belong to $V_j(H,S)$.
Due to this and since the points in $H$ are the $j$ nearest sites of~$v$,
$q$ is the $(j+1)^{\mathrm{st}}$ nearest site of $v$.
Therefore, for $j'> j$, if $v\in V_{j'}(H',S)$, $q \in H'$,
implying that $v$ will not be activated by a wavefront propagated from $q$ again after the $j^{\mathrm{th}}$ iteration.
In other words, $v$ causes at most one activation event due to the wavefront propagation of $q$ during the $k-1$ iterations,
and the $\P(S)$ causes $O(n)$ activation events, i.e.,  $\sum_{1\leq j\leq k-1}a_j=O(n)$.
Furthermore, $m_1$ has been proved to be $O(n+c)$\cite{AAP-04,BKC-09,GSW-08}.
Therefore, $m_k=O(n+kc)$. \qed
\end{proof}

\begin{Theorem}\label{thm-upper}
The structural complexity of $V_k(S)$ is $O(k(n-k)+kc)$.
\end{Theorem}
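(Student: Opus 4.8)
The plan is to obtain the theorem as an immediate consequence of the two structural lemmas already established in this subsection, namely Lemma~\ref{lem-mix-upper} and Lemma~\ref{lem-mix-vertices}. Lemma~\ref{lem-mix-upper} bounds the total combinatorial complexity of $V_k(S)$ by $O(M + k(n-k))$, where $M$ is the total number of mixed vertices; this bound already absorbs the fact (from Lemma~\ref{lem-mVv}) that each Voronoi edge with $m$ mixed vertices is a concatenation of $m+1$ pieces of $L_1$ needle-bisectors, each of complexity $O(1)$. Lemma~\ref{lem-mix-vertices} then supplies the missing estimate $M = O(n+kc)$.

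Concretely, I would first substitute $M = O(n+kc)$ into the bound of Lemma~\ref{lem-mix-upper}, yielding structural complexity $O(n + kc + k(n-k))$. Then I would simplify: for every $k$ with $1 \le k \le n-1$ we have $k(n-k) \ge 1\cdot(n-1) = n-1$, hence $n = O(k(n-k))$, so the stand-alone term $n$ is dominated and the bound collapses to $O(k(n-k) + kc)$, as claimed.

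Since both ingredients are already proved, there is essentially no obstacle left in the theorem statement itself; the substantive difficulty was pushed into the lemmas. The two genuinely delicate points, which I would only cite here, are: (i) Lemma~\ref{lem-mVv}, which reduces the $\Theta(c)$-complexity of a point-to-point city bisector along a single Voronoi edge to $O(m_e+1)$ by covering the edge with $O(1)$-complexity $L_1$ needle-bisectors split at mixed vertices; and (ii) the recursive activation-event bookkeeping in Lemmas~\ref{lem-mix-region}–\ref{lem-mix-vertices}, whose crux is that each point of $\P(S)$ causes at most one activation event over all $k-1$ iterations (once a site activates such a point it becomes one of its $j$ nearest sites and never activates it again), so $\P(S)$ contributes only $O(n)$ in total while $\P(V_C)\cup V_C$ contributes $O(c)$ per iteration, i.e.\ $O(kc)$ overall. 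The only routine check is that the additive $O(\cdot)$ hidden in Lemma~\ref{lem-mix-upper} composes harmlessly with the additive $O(\cdot)$ inside $M$, which it does.
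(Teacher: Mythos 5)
Your proposal is correct and matches the paper's own (implicit) argument exactly: the paper derives Theorem~\ref{thm-upper} by combining Lemma~\ref{lem-mix-upper} with Lemma~\ref{lem-mix-vertices}, just as you do. Your extra remark that the stand-alone $n$ term is absorbed because $n-1 \le k(n-k)$ for $1 \le k \le n-1$ is a small detail the paper leaves implicit, and it is handled correctly.
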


\subsection{Lower Bound}\label{sub-lower}

\deleted{
\begin{figure}
\centering
\includegraphics[clip, width=8cm]{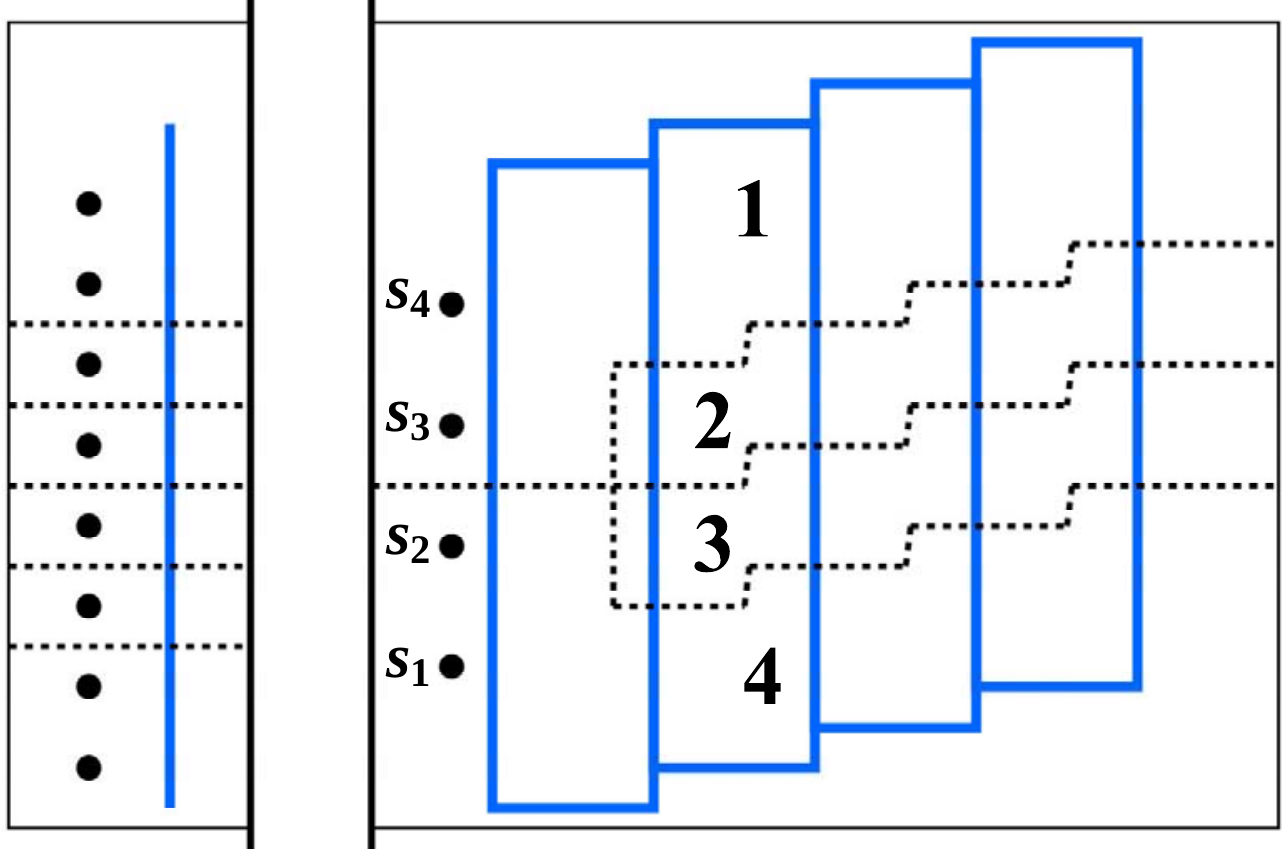}
\caption{A worst-case example
where $k = 3$, $n = 12$, $c = 18$ leads to a lower bound $\Omega(n+kc)$.
The bold solid segments depict the transportation network,
and the dashed segments compose $V_{k}(S)$.
The right part is also the farthest-site city Voronoi diagram of $\{s_1, s_2 , s_3, s_4\}$,
where all points in Region~i share the same farthest site $s_i$.
}\label{fig-worst}
\end{figure}
}

We construct a worst-case example (see Fig.~\ref{fig-worst}) to derive a lower bound for the structural complexity of the \kthorder city Voronoi diagram $V_k(S)$.
The example consists of a left part and a right part which are placed with a sufficiently large distance between them.
We place one vertical network segment in the left part
and build a stairlike transportation network in the right part.
Then, we place $k+1$ sites in the right part and the remaining $n-k-1$ sites in the left part.
Since the distance between the left and right part is extremely large,
the $n-k-1$ sites in the left part hardly influence the formation of $V_k(S)$ in the right part.
Therefore, $V_k(S)$ in the right part forms the farthest-site city Voronoi diagram of the $k+1$ sites,
because sharing the same $k$ nearest sites among $k+1$ sites
is equivalent to sharing the same farthest site among the $k+1$ sites.

\begin{figure}[tb]
\center
\includegraphics[clip, width=.47\textwidth]{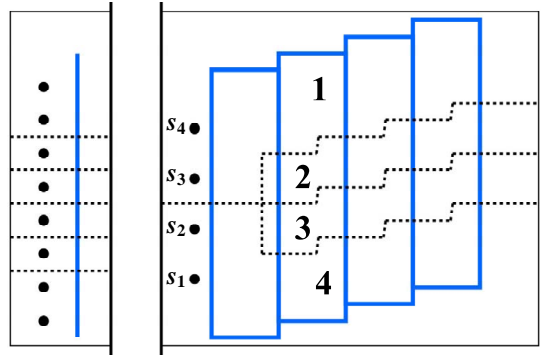}
\caption{This worst-case example
(here with $k = 3$, $n = 12$, $c = 18$) leads to a lower bound of $\Omega(n+kc)$.
The bold solid segments depict the transportation network,
and the dashed segments compose $V_{k}(S)$.
The right part is also the farthest-site city Voronoi diagram of $\{s_1, s_2 , s_3, s_4\}$,
where all points in Region~$i$ share the same farthest site $s_i$.
}\label{fig-worst}
\end{figure}

By construction,
as shown in the right part of Fig.~\ref{fig-worst},
all the points in Region~$i$ share the same farthest site $s_i$.
Since we can set the speed $\nu$ to be large enough,
for each point $x$ in Region~2,
the shortest path between $x$ and $s_1$ ($s_2$) moves along the transportation network counterclockwise,
and thus $d_C(s_2,x)> d_C(s_1,x)$.
The common Voronoi edge between Regions $i$ and ($i+1$)
contains at least $(\frac{c-2}{4}-1)\cdot 2+1$ (here: 7) segments
since the transportation network forms $\frac{c-2}{4}$ rectangles
and each rectangle except the first one contains two vertices of the Voronoi edge.
Therefore, in the right part,
$V_k(S)$ contains at least $(k-1)\frac{c-6}{2}=\Omega(kc)$ segments.
Together with the $\Omega(n-k)$ in the left part, we obtain the following lower bound.

\newcommand{\thmlowerdiagramtext}{The structural complexity of $V_k(S)$ is $\Omega(n+kc)$.
}

\begin{Theorem}\label{thm-lower-diagram}
\thmlowerdiagramtext
\end{Theorem}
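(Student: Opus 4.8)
The plan is to analyse the gadget sketched above (Fig.~\ref{fig-worst}) and to show that it forces $\Omega(n-k)$ Voronoi edges inside the left part and $\Omega(kc)$ Voronoi edges inside the right part. Since the two parts are disjoint these counts add, giving structural complexity $\Omega((n-k)+kc)$; and because the transportation network is non-empty ($c\ge 1$) we have $(n-k)+kc\ge (n-k)+k=n$ as well as $(n-k)+kc\ge kc$, hence $(n-k)+kc=\Omega(n+kc)$, which is the claimed bound. It therefore suffices to justify the two contributions.

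For the right part I first argue that $V_k(S)$ restricted to it coincides with the farthest-site city Voronoi diagram $\FV(\{s_1,\dots,s_{k+1}\})$ of the $k+1$ right sites: since the gap between the two parts is chosen much larger than the diameters of both gadgets, every point $x$ of the right part is $d_C$-closer to all $k+1$ right sites than to any of the $n-k-1$ left sites, so the $k$ nearest sites of $x$ are exactly $\{s_1,\dots,s_{k+1}\}$ with its unique farthest site removed; thus the regions of $V_k(S)$ in the right part are precisely the regions ``$s_i$ is the farthest site''. Next, with the $s_i$ placed as in Fig.~\ref{fig-worst} so that Regions $1,\dots,k+1$ occur consecutively, I show that the boundary $e_i$ between Region $i$ and Region $i+1$ follows the staircase and carries $\Omega(c)$ segments. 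The crucial point is that for $\nu$ chosen large enough, every quickest path from a point $x$ in the transition strip of $e_i$ to either of the two sites defining $e_i$ must run along the network, winding around the staircase with opposite senses for the two sites; consequently $e_i$ is a connected piece of an $L_1$-type bisector between two needles that is forced to re-enter each of the $\Theta(c)$ rectangular cells, contributing two breakpoints per cell, i.e.\ $\Omega(c)$ segments. Summing over at least $k-1$ such interior boundaries gives $\Omega(kc)$ edges in the right part.

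For the left part I place the $n-k-1$ left sites in convex position. When $k\le (n-2)/2$, the far-away right sites are irrelevant to every left point, so $V_k(S)$ restricted to the left part is the order-$k$ Voronoi diagram of these $\Theta(n)$ points in convex position, which has $\Omega(k(n-2k))=\Omega(n-k)$ edges; when $k>(n-2)/2$, the term $\Omega(kc)=\Omega(nc)$ already contributed by the right part dominates $n+kc$, so that case needs no separate treatment. Combining with the arithmetic of the first paragraph yields $\Omega(n+kc)$. \emph{Main obstacle.} The delicate step is the geometric analysis of the right part: one must choose the positions of $s_1,\dots,s_{k+1}$ and the aspect ratios of the staircase cells so that (i) the $k+1$ farthest-site regions really appear in the intended order with each consecutive pair adjacent, and (ii) for large $\nu$ the two competing quickest paths genuinely wind along the network with opposite orientation, so that $e_i$ oscillates through every cell rather than cutting straight across; establishing this oscillation is exactly where the factor $\Theta(c)$ is produced.
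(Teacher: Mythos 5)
Your proposal follows essentially the same route as the paper: the same two-part worst-case construction, with the right part realizing the farthest-site diagram of $k+1$ sites along a staircase network so that each of the $k-1$ consecutive region boundaries picks up $\Omega(c)$ segments (two breakpoints per rectangular cell), plus a case distinction on whether $n-k-1\geq k+1$ so the left part supplies the remaining $\Omega(n-k)$ segments. The only minor deviation is your left-part gadget (sites in convex position with an order-$k$ complexity bound of $\Omega(k(n-2k))$) where the paper uses a simpler placement yielding $n-k-2$ segments, but both suffice for the $\Omega(n+kc)$ bound.
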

\begin{proof}
We need to distinguish two cases:

i) $n-k-1\geq k+1 \Rightarrow 2k\leq n-2$:
This implies that the left part contains $n-k-2$ segments, and thus, $V_k(S)$ contains $n-k-2+\Omega(kc)=\Omega(n+kc)$ segments.

ii) $2k> n-2$:
This implies that the left part is empty, and thus,
$V_k(S)$ contains $\Omega(kc)=\Omega(k(c+2))=\Omega(n+kc)$ segments.

This concludes the proof. \qed
\end{proof}


\section{Algorithms}
\label{sec-algorithms}

In this section we present an iterative algorithm to compute \kthorder city Voronoi diagrams in
$O(k^2(n+c)\log (n+c))$ time. Its main idea has already been introduced in the complexity considerations in Section~\ref{sub-upper}.
For the special case of the
farthest-site Voronoi diagram, i.e., the $(n-1)^{\mathrm{st}}$-order Voronoi diagram,
this algorithm takes $O(n^2(n+c)\log(n+c))$ time. However, for the farthest-site city Voronoi diagram we present a divide-and-conquer algorithm which requires only $O(nc\log^2(n+c)\log n)$ time.

\subsection{Iterative Algorithm for $k^{\mathrm{th}}$-Order City Voronoi Diagrams}

We describe an algorithm to compute \kthorder city Voronoi diagrams $V_k(S)$ based on
the ideas in Section~\ref{sub-upper} and
Bae et al.'s \cite{BKC-09} $O((n+c)\log(n+c))$-time
algorithm for the first-order city Voronoi diagram $V_1(S)$.
Bae et al.'s approach views each point site in $S$ as a needle with zero-weight and zero-length,
and simulates the wavefront propagation from those needles to compute $V_1(S)$.
Since their approach can handle general needles,
we adopt it to simulate the wavefront propagation of Section~\ref{sub-upper} to compute $V_{j+1}(S)$ from $V_j(S)$.

\paragraph{Algorithm.}
We give the description of our algorithm for a single Voronoi region $V_{j}(H, S)$.
All four steps have to be repeated for each Voronoi region of $V_j(S)$.

Let $V_{j}(H, S)$ have $h$ adjacent regions $V_{j}(H_i, S)$ with $H_i \setminus H = \{q_i\}$ for $1\leq i\leq h$
and let $Q = \bigcup_{1\leq i\leq h} q_i$.
Our algorithm computes $V_{j}(H, S)\cap V_{j+1}(S)$ as follows:
\begin{enumerate}
  \item Compute a new set $N$ of sites (needles):
        For $1\leq i\leq h$, if the Voronoi edge between $V_j(H_i, S)$ and $V_{j}(H, S)$ intersect $m_i$ regions $\SPM_{q_i}(v_z)$ in $\SPM_{q_i}$, $1 \leq z \leq m_i$, insert every $\eta_{q_i}(v_z)$ into $N$.
  \item Construct a new transportation network $C_H$ from $C$:
        For each point $v \in (\P(V_C)\cup \P(Q) \cup V_C)\cap V_{j}(H, S)$,
        if $v$ is located on an edge $e$ of $C$, insert $e$ into $C_H$.
  \item Perform Bae et al.'s wavefront-based approach
         to compute $V_1(N)$ under the new transportation network $C_H$.
         The approach can intrinsically handle needles as weighted sites.
  \item Determine $V_{j}(H, S)\cap V_{j+1}(S)$ from $V_1(N)$:
		Consider each edge $e$ in $V_{j}(H, S) \cap V_1(N)$.
        Let~$e$ be an edge between $V_1(\eta_p(v_p), N)$ and $V_1(\eta_q(v_q), N)$
        where $p, q\in S$, $v_p\in A(p)$ and $v_q\in A(q)$.
        If $p\neq q$, then $e\cap V_{j}(H, S)$ is part of $V_{j}(H, S)\cap V_{j+1}(S)$.
\end{enumerate}

Note that Step 2 is used only to reduce the runtime of the algorithm.
Lemma~\ref{lem-compute-region} shows the correctness and the run time of this algorithm for a single Voronoi region.

\newcommand{\lemcomputeregion}{%
$V_{j}(H, S)\cap V_{j+1}(S)$ can be computed in $O((h+m+c_H)\log (n+c))$ time,
where~$h$ is the number of Voronoi edges, $m$ is the number of mixed vertices,
and $c_H=|(\P(V_C)\cup V_C)\cap V_{j}(H, S)|$.
}

\begin{Lemma}\label{lem-compute-region}
\lemcomputeregion
\end{Lemma}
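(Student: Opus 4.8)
The plan is to verify the correctness of each of the four algorithm steps and then bound the runtime of each step by $O((h+m+c_H)\log(n+c))$. For correctness, I would first argue that the set $N$ of needles built in Step~1 is exactly the set of wavefronts that should be propagated into $V_j(H,S)$ according to the new interpretation of Section~\ref{sub-upper}: for each adjacent region the Voronoi edge $e_{q_i}$ crosses $m_i$ regions of $\SPM_{q_i}$, one exterior mixed vertex sitting between each consecutive pair, so $\sum_i m_i = h + m_{\mathrm{ext}}$ where $m_{\mathrm{ext}}$ is the number of exterior mixed vertices of $V_j(H,S)$, which is at most $m$. Then I would invoke the identity $V_j(H,S)\cap V_{j+1}(S) = V_j(H,S)\cap V_1(Q)$ of Lee~\cite{Lee-82}, combined with the fact that inside $V_j(H,S)$ the distance from a point to $q_i$ equals the $L_1$ distance to the appropriate needle $\eta_{q_i}(v_z)$ (this is precisely the $\SPM$ property used in Lemma~\ref{lem-mVv}), to conclude that $V_j(H,S)\cap V_1(Q)$ restricted to $V_j(H,S)$ coincides with $V_j(H,S)\cap V_1(N)$ after discarding edges that separate two needles with the same source (Step~4): such edges lie in the interior of a single $V_{j+1}$-region and are artifacts of the $\SPM$ subdivision, not genuine Voronoi edges.

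Next I would justify that restricting the transportation network to $C_H$ in Step~2 does not change the outcome. The point is that a wavefront propagated inside $V_j(H,S)$ only ever changes speed at a point of $A(q_i)$ that actually lies in $V_j(H,S)$; edges of $C$ that contain no point of $(\P(V_C)\cup\P(Q)\cup V_C)\cap V_j(H,S)$ cannot host an activation event relevant to the restricted diagram, so deleting them leaves $V_1(N)$ unchanged within $V_j(H,S)$. I would need to be slightly careful that an edge of $C$ could still be traversed by a quickest path without hosting an activation vertex — but by the characterization of Aichholzer et al.~\cite{AAP-04} a wavefront only accelerates after hitting such an activation vertex, and the needles already encode all the relevant accelerated segments reachable from $Q$, so any further speed change must occur at a vertex in $(\P(V_C)\cup\P(Q)\cup V_C)$, which by construction keeps the hosting edge in $C_H$.

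For the runtime, the input to Step~3 (Bae et al.'s algorithm~\cite{BKC-09}) has $|N| = \sum_i m_i = O(h+m)$ needles and a network $C_H$ of complexity $O(c_H)$; their algorithm runs in $O((|N|+|C_H|)\log(|N|+|C_H|)) = O((h+m+c_H)\log(n+c))$ time, where I bound the inner $\log$ term by $\log(n+c)$ since $h, m, c_H$ are all $O(n+c)$ in total over the whole diagram (and certainly per region). Steps~1 and~4 are linear scans over the $O(h+m)$ edges and $\SPM$-regions involved, each requiring a point-location or $\SPM$ query that costs $O(\log(n+c))$, so they fit within the stated bound. Step~2 is a linear scan over $O(c_H + h)$ projection points, again with $O(\log(n+c))$ per point for locating the hosting edge. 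Summing gives $O((h+m+c_H)\log(n+c))$.

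The main obstacle I anticipate is the correctness argument for Step~4 together with the network restriction in Step~2 — specifically, proving that $V_1(N)$ computed over the truncated network $C_H$ agrees with the true Voronoi structure inside $V_j(H,S)$, and that every spurious edge produced (an edge between $V_1(\eta_p(v_p),N)$ and $V_1(\eta_q(v_q),N)$ with $p=q$) genuinely lies strictly inside one $(j+1)$-order region. This requires carefully relating the wavefront-propagation semantics of Section~\ref{sub-upper}, the $\SPM$ decomposition, and Bae et al.'s needle machinery, and making sure no activation event that matters is lost when edges of $C$ are dropped. The counting identity $\sum_i m_i = O(h+m)$ and the application of Lee's region/vertex bounds are comparatively routine once the semantic equivalence is nailed down.
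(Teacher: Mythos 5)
Your proposal follows essentially the same route as the paper's proof: correctness rests on Lee's identity $V_{j}(H,S)\cap V_{j+1}(S)=V_{j}(H,S)\cap V_1(Q)$ together with the observation that Step~1 supplies every needle $\eta_q(v)$ whose $\SPM_q(v)$ region reaches the boundary of $V_j(H,S)$ while Step~2 retains every network edge hosting an interior activation point, and the runtime follows from $|N|=O(h+m)$, $|C_H|=O(c_H+h)$, Bae et al.'s $O(x\log x)$ bound, and collapsing the logarithm to $O(\log(n+c))$. The only slight deviation is your justification of that collapse via a per-region claim $m=O(n+c)$, which is unnecessary and not what the paper argues; the paper simply uses $m=O(n+kc)=O(nc)$ and $\log(nc)=O(\log(n+c))$, so your conclusion stands unchanged.
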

\begin{proof}
We begin by proving correctness.
Since $V_{j}(H, S)\cap V_{j+1}(S)$ is exactly $V_{j}(H, S)\cap V_1(Q)$ \cite{Lee-82},
it is sufficient to prove that the algorithm correctly computes $V_{j}(H, S)\cap V_1(Q)$.
If the algorithm fails to compute $V_1(q, Q)\cap V_{j}(H, S), q\in Q$,
it must fail to propagate a wavefront from a needle $\eta_q(v)$, where $v$ belongs to $A(q)$ and
$\SPM_q(v)\cap V_1(q, Q)\cap V_{j}(H, S)$ is nonempty.
We prove that this cannot occur by contradiction. Assume that the algorithm does not propagate a wavefront from an $\eta_q(v)$ for some $v$ in $A(q)$ and $\SPM_q(v)\cap V_1(q, Q)\cap V_{j}(H, S)$ is nonempty.
However, either $v\notin V_{j}(H, S)$ and $\SPM_q(v)\cap B_C(q, p)\cap V_{j}(H, S)$ must be nonempty, then,
Step~1 will include $\eta_q(v)$ in $N$.
Or $v \in V_{j}(H, S)$, then
Step~2 will include the corresponding network segment in $C_H$,
and thus $\eta_q(v)$ will be activated to propagate a wavefront.
Both possibilities contradict the initial assumption.
Therefore, the algorithm correctly computes $V_{j}(H, S)\cap V_{j+1}(S)$.

We proceed by giving time complexity considerations.
It is clear that $|N|$ is $O(m+h)$.
The run time of step~1 is linear in the complexity of the boundary of $V_{j}(H, S)$
and thus is $O(m+h)$.
Since by definition $|(\P(V_C)\cup V_C)\cap V_{j}(H, S)| = O(c_H)$ and $|\P(Q)|=O(h)$,
both $|V_{C_H}|$ and $|E_{C_H}|$ are in $O(c_H+h)$.
Since $|N|=O(m+h)$ and  $E_{C_H}=O(c_H+h)$,
Step~3 takes $O((h+m+c_H)\log (h+m+c_H))$ time \cite{BKC-09}.
Step~4 takes the time linear in the complexity of $V_{1}(N)\cap V_{j+1}(S)$.
The activation events associated with vertices in $\P(S)$ are only associated to vertices in $\P(Q)$, we know that $|\P(Q)| = O(h)$.
Therefore, since there are $O(m+h)$, $O(c_H)$, and $O(h)$ wavefronts
due to $N$, $(\P(V_C)\cup V_C)\cap V_{j}(H, S)$, and $\P(Q)$, respectively,
the complexity of $V_{1}(N)$ is $O(m+h)+O(c_H)+O(h)=O(h+m+c_H)$.
Since $m=O(n+kc)=O(nc)$ it holds that $O(\log(h+m+c_H))=O(\log(nc))=O(\log(n+c)^2)=O(\log(n+c))$.
We conclude that the total running time is  $O((h+m+c_H)\log (n+c))$.
\qed
\end{proof}

Applying Lemma~\ref{lem-compute-region} to each region of $V_j(S)$ combined with Theorem~\ref{thm-upper}
leads to Lemma~\ref{lem-compute-diagram}.
The summation of $O((j(n-j)+jc)\log(n+c))$ in Lemma~\ref{lem-compute-diagram} for $1\leq j\leq k-1$
gives Theorem~\ref{thm-kth-time}.

\newcommand{\lemcomputediagramtext}
{
$V_{j+1}(S)$ can be computed from $V_j(S)$ in $O((j(n-j)+jc)\log(n+c))$ time.
}

\begin{Lemma}\label{lem-compute-diagram}
\lemcomputediagramtext
\end{Lemma}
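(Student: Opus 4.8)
The plan is to apply Lemma~\ref{lem-compute-region} to every non-empty Voronoi region of $V_j(S)$ and add up the running times. Fix a $j$-element subset $H\subset S$ with $V_j(H,S)\neq\emptyset$, let $h_H$ be the number of Voronoi edges of $V_j(H,S)$, let $m_H$ be its number of exterior mixed vertices, and let $c_H=|(\P(V_C)\cup V_C)\cap V_j(H,S)|$. Lemma~\ref{lem-compute-region} computes $V_j(H,S)\cap V_{j+1}(S)$ in $O((h_H+m_H+c_H)\log(n+c))$ time, and by Lee~\cite{Lee-82} the union of these pieces over all such $H$ is exactly $V_{j+1}(S)$. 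Hence the total running time is $O\big(\log(n+c)\cdot\sum_H (h_H+m_H+c_H)\big)$, and it remains to bound the three sums $\sum_H h_H$, $\sum_H m_H$, and $\sum_H c_H$.

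First, $\sum_H h_H=O(j(n-j))$: Lee~\cite{Lee-82} shows that $V_j(S)$ has $O(j(n-j))$ Voronoi edges, and each Voronoi edge belongs to exactly two regions. Second, $\sum_H m_H = O(n+jc)$: as noted in Section~\ref{sub-mix}, every mixed vertex of $V_j(S)$ is an exterior mixed vertex of exactly one of its two incident regions, so $\sum_H m_H$ equals the total number $m_j$ of mixed vertices of $V_j(S)$, which is $O(n+jc)$ by Lemma~\ref{lem-mix-vertices}. Third, $\sum_H c_H=O(c)$: the set $\P(V_C)\cup V_C$ contains $O(c)$ points, each of which lies in at most three Voronoi regions. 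Consequently $\sum_H (h_H+m_H+c_H)=O(j(n-j)+n+jc+c)$.

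Finally, for $1\le j\le n-1$ we have $j(n-j)\ge n-1\ge n/2$, so $n=O(j(n-j))$, and trivially $c=O(jc)$; therefore the total running time simplifies to $O((j(n-j)+jc)\log(n+c))$, which is the claimed bound. The only point that requires care is the accounting for $\sum_H m_H$: one must match the quantity $m$ in Lemma~\ref{lem-compute-region} with a count that is charged exactly once per global mixed vertex, so that Lemma~\ref{lem-mix-vertices} applies without losing a factor. If instead $m$ is read as the number of mixed vertices of $V_j(H,S)\cap V_{j+1}(S)$, one first bounds it via Lemma~\ref{lem-mix-region} by $m_H+2c_H+2a_H$ and then sums, which merely re-introduces terms already shown to be $O(n+jc)$ and $O(c)$; either way the estimate goes through, and the rest is a routine summation.
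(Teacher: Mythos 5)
Your proposal is correct and follows essentially the same route as the paper: apply Lemma~\ref{lem-compute-region} to every region of $V_j(S)$ and bound $\sum_H h_H$, $\sum_H m_H$, and $\sum_H c_H$, where the paper bounds the first two sums in one stroke via Theorem~\ref{thm-upper} (itself the combination of Lee's edge count and Lemma~\ref{lem-mix-vertices}) while you invoke those two ingredients separately. Your explicit handling of the per-region versus global counting of mixed vertices matches the paper's (tacit) accounting, so no gap remains.
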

\begin{proof}
For a Voronoi region $V_j(H, S)$,
let $h_H$ be the number of Voronoi edges,
$m'_H$ be the number of mixed vertices
\chihhung{the prime is to distinguish from the ``exterior'' mixed vertices.},
and $c_H$ be $|V_j(H, S) \cap (\P(V_C)\cup V_C)|$.
By Lemma~\ref{lem-compute-region},
the time complexity of computing $V_{j+1}(S)$ from $V_j(S)$ is
\[\sum_{H\in S, |H|=j}( (h_H+m'_H+c_H)\log (n+c)).\]
By Theorem~\ref{thm-upper},
$\sum_{H\in S, |H|=j}h_H + m'_H =O(j(n-j)+jc)$
\chihhung{Although each mixed vertices will be counted twice during the summation,
it still holds.}.
It is also clear that $\sum_{H\in S, |H|=j}c_H=O(c)$.
Therefore,
the total time complexity is $O((j(n-j)+jc)\log(n+c))$.
The correctness follows from the correctness proof of Lemma~\ref{lem-compute-region}.\qed
\end{proof}

\newcommand{\thmkthtimetext}{
$V_k(S)$ can be computed in $O(k^2(n+c)\log(n+c))$ time.
}

\begin{Theorem}\label{thm-kth-time}
\thmkthtimetext
\end{Theorem}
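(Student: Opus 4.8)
The plan is to iterate the single-step construction of Lemma~\ref{lem-compute-diagram}, starting from the first-order diagram. First I would compute $V_1(S)$ with the wavefront-based algorithm of Bae et al.~\cite{BKC-09} in $O((n+c)\log(n+c))$ time. Then, for $j = 1, 2, \ldots, k-1$, I would apply Lemma~\ref{lem-compute-diagram} to obtain $V_{j+1}(S)$ from $V_j(S)$; by that lemma each such step costs $O((j(n-j)+jc)\log(n+c))$ time, and its correctness rests on Lee's identity $V_j(H,S)\cap V_1(Q)=V_j(H,S)\cap V_{j+1}(S)$~\cite{Lee-82} together with the per-region correctness argument of Lemma~\ref{lem-compute-region}. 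After $k-1$ steps we have $V_k(S)$, so correctness is immediate by induction on $j$.

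For the running time I would simply sum the per-step bounds: the total is
\[ O\!\Big((n+c)\log(n+c)\Big) + \sum_{j=1}^{k-1} O\!\Big((j(n-j)+jc)\log(n+c)\Big). \]
Using $j(n-j)\le jn$ and $\sum_{j=1}^{k-1} j = O(k^2)$ gives $\sum_{j=1}^{k-1} j(n-j) = O(k^2 n)$ and $\sum_{j=1}^{k-1} jc = O(k^2 c)$, so the sum is $O(k^2(n+c)\log(n+c))$, which also dominates the base cost of computing $V_1(S)$. This yields the claimed bound.

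The one place that needs care — and which is already settled inside the proof of Lemma~\ref{lem-compute-region} — is that the logarithmic factor in each per-region call is a priori $\log(h+m+c_H)$, with the mixed-vertex count $m$ potentially as large as $\Theta(nc)$; one has to note that $\log(nc)=O(\log(n+c))$, so the logarithmic factor is uniformly $O(\log(n+c))$ over all regions and all iterations, which is exactly what makes the telescoping sum above collapse cleanly to a single $\log(n+c)$ factor. I do not expect any other obstacle: everything else is the routine arithmetic of summing the per-step bounds over $j$.
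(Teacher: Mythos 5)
Your proposal is correct and follows essentially the same route as the paper: iterate Lemma~\ref{lem-compute-diagram} and sum the per-step bounds $O((j(n-j)+jc)\log(n+c))$ over $j=1,\ldots,k-1$ to obtain $O(k^2(n+c)\log(n+c))$. Your explicit mention of the base computation of $V_1(S)$ and of the $\log(nc)=O(\log(n+c))$ normalization are details the paper leaves implicit (the latter inside Lemma~\ref{lem-compute-region}), but the argument is the same.
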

\begin{proof}
By Lemma~\ref{lem-compute-diagram},
the total time complexity is $\sum_{i=1}^{k-1} O((i(n-i)+ic)\log(n+c))=O(k^2(n+c)\log(n+c))$.
\qed
\end{proof}

\subsection{Divide-and-Conquer Algorithm for Farthest-Site City Voronoi Diagram}

In this section we describe a divide-and-conquer approach to compute the farthest-site city Voronoi diagram $\FV(S)$.
Since there are $n$ Voronoi regions in $\FV(S)$ and each of them is associated with a site $p \in S$,
we denote such a region by $\FV(p, S)$.

The idea behind this algorithm is as follows: To compute ${\cal FV}(S)$,
divide $S$ into two equally-sized sets $S_1$ and $S_2=S\setminus S_1$,
compute ${\cal FV}(S_1)$ and ${\cal FV}(S_2)$,
and then merge the two diagrams into ${\cal FV}(S)$.
Now, suppose we have already computed ${\cal FV}(S_1)$ and ${\cal FV}(S_2)$.
Then, the edges of a Voronoi region ${\cal FV}(p, S)$ in $\FV(S)$ stem from three sources: i) contributed by $\FV(S_1)$, ii) contributed by $\FV(S_2)$, and iii) contributed by two points, one in $S_1$ and the other in $S_2$, that have the same distance to two farthest site.
In fact, the union of all of the third kind of edges is $B_C(S_1, S_2)$.
Each connected component of $B_C(S_1, S_2)$
is called a \emph{merge curve}. A merge curve can be either a closed or open simple curve.

If all the merge curves are computed,
merging ${\cal FV}(S_1)$ and ${\cal FV}(S_2)$
takes time linear in the complexity of $B_C(S_1, S_2)$.
\deleted{
\begin{enumerate}
  \item Find at one point on each merge curve.
  \item Trace out the merge curve from the discovered point.
\end{enumerate}
}
To compute the merge curves, we first need to find a point on each merge curve, and then trace out the merge curves from these discovered points.

In order to compute a merge curve,
we modify Cheong et al.'s divide-and-conquer algorithm \cite{CEGGHLLN-11}
for farthest-polygon Voronoi diagrams in the Euclidean metric to satisfy our requirements.
Given a set $\P$ of disjoint polygons, $\P = \{P_1, P_2,\ldots, P_m\}$,
of total complexity~$n$,
the farthest-polygon Voronoi diagram ${\cal FV}({\cal P})$ partitions the plane
into Voronoi regions such that all points in a Voronoi region share the same farthest polygon in ${\cal P}$.
Let $|P|$ be the number of vertices of a polygon $P\in {\cal P}$
and let $|\P|$ be $\sum_{P\in {\cal P}} |P| = n$.

Their algorithm computes the medial-axis ${\cal M}(P)$ for each polygon $P\in {\cal P}$
and refines ${\cal FV}(P,{\cal P})$ by ${\cal M}(P)$.
${\cal M}(P)$ partitions the plane into regions such that
all points in a region share the same closest element of $P$,
where an element is a vertex or an edge of $P$.
In other words,
for each point $v \in \R^2$,
${\cal M}(P)$ provides a shortest path between $v$ and $P$.
Therefore,
the medial axes for ${\cal FV}(P,{\cal P})$,
with $P \in {\cal P}$,
have the same function as the shortest path maps ${\cal SPM}_p$,
with $p \in S$ in the city metric.
By replacing ${\cal P}$ and ${\cal M}(P)$ with $S$ and ${\cal SPM}_p$ respectively,
the divide-and-conquer algorithm of Cheong et al. \cite{CEGGHLLN-11}
can be modified to compute ${\cal FV}(S)$
with respect to the city metric.

\deleted{The major difference between the farthest-polygon Voronoi diagram and
the farthest-site city Voronoi diagram is the structural complexity.
First, the complexity of ${\cal FV}({\cal P})$ is $\Theta(n)$,
while that of ${\cal FV}(S)$ is $\Theta(nc)$.
Second, if $|P|=m$, $|{\cal M}(P)|=\Theta(m)$,
while $|{\cal SPM}_p|=\Theta(c)$.
Therefore, $\sum_{P\in {\cal P}}|{\cal M}(P)|=O(n)$,
while $\sum_{P\in {\cal P}}|{\cal SPM}_p|=O(nc)$.
\andreas{this paragraph seems to be out of place...}
\chihhung{Find another feasible place or just remove it.
The motivation is to clarify a potential question,
why ${\cal FV}(\P)$ can be constructed in $O(n\log^3 n)$ time
but ${\cal FV}(S)$ requires $O(nc\log n \log^2(n+c))$ time.
The answer is their different sizes.}}

Cheong et al. \cite{CEGGHLLN-11} pointed out
the bottleneck with respect to running time is to find for each closed merge curve
a point that lies on it.
In order to overcome the bottleneck,
the authors use some specific point location data structures
\cite{EGS-86,Mulmuley-90}.
Let ${\cal P}$ be divided into two sets ${\cal P}_1$ and  ${\cal P}_2={\cal P}\setminus {\cal P}_1$,
where $|{\cal P}_1|\approx |{\cal P}_2|\approx \frac{n}{2}$.
Cheong et al. \cite{CEGGHLLN-11} construct the point location data structures for ${\cal FV}({\cal P}_1)$ and ${\cal FV}({\cal P}_2)$.
For each polygon $P \in {\cal P}_1$ and each vertex $v \in {\cal M}(P)\cap{\cal FV}(P,{\cal P}_1)$,
they perform a point location query in ${\cal FV}({\cal P}_1)$  and ${\cal FV}({\cal P}_2)$ (likewise for each polygon $P' \in {\cal P}_2$).
Each point location query requires $O(\log n)$ primitive operations,
and each operation tests for $O(1)$ points and takes $O(\log n)$ time.
Hence, one point location query takes $O(\log^2 n)$ time.
Since $|{\cal FV}({\cal P}_1)|=|{\cal FV}({\cal P}_1)|=O(n)$,
merging ${\cal FV}({\cal P}_1)$ and ${\cal FV}({\cal P}_2)$
takes $O(n\log^2 n)$ time.

Since in our case
$|{\cal FV}(S_1)|=|{\cal FV}(S_2)|=O(nc)$,
we perform $O(nc)$ point location queries,
each of which takes $O(\log^2{nc})= O(\log^2(n+c)^2)=O(\log^2{(n+c)})$ time.
Therefore,
merging ${\cal FV}(S_1)$ and ${\cal FV}(S_2)$
takes $O(nc\log^2 (n+c))$ time.
We conclude:

\newcommand{\thmfarthesttimetext}{$\FV(S)$ can be computed in $O(nc\log n\log^2(n+c))$ time.
}

\begin{Theorem}~\label{thm-farthest-time}
\thmfarthesttimetext
\end{Theorem}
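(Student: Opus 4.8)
The plan is to bound the running time of the divide-and-conquer scheme described above and then solve the resulting recurrence; correctness follows from the stated analogy with Cheong et al.'s algorithm. First I would set up the recursion tree. At a node responsible for a subset $S'\subseteq S$ with $|S'|=n'$, we split $S'=S_1'\cup S_2'$ with $|S_1'|\approx|S_2'|\approx n'/2$, recurse, and merge $\FV(S_1')$ and $\FV(S_2')$ along $B_C(S_1',S_2')$. By Theorem~\ref{thm-upper} with $k=n'-1$, each of $\FV(S_1'),\FV(S_2')$, refined by the maps $\SPM_p$, has complexity $O(n'c)$, and the part of $B_C(S_1',S_2')$ appearing in $\FV(S)$ has complexity $O(n'c)$ as well, so tracing the merge curves once a seed point of each is known costs $O(n'c)$. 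The dominant cost is finding one point on every merge curve, which (exactly as analyzed above) reduces to $O(n'c)$ point-location queries — one per vertex $v$ of $\SPM_p\cap\FV(p,S_i')$, over all $p\in S_i'$ and $i\in\{1,2\}$ — in the point-location structures for $\FV(S_1')$ and $\FV(S_2')$; since those structures have size $O(nc)$ and $\log(nc)=O(\log(n+c))$, each query costs $O(\log^2(n+c))$. Hence the node does $O(n'c\log^2(n+c))$ work, giving
\[ T(n') = 2\,T(n'/2) + O\!\bigl(n'c\log^2(n+c)\bigr). \]

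Before the recursion I compute $\SPM_p$ for every $p\in S$; each depends only on $p$ and the network and is obtained by a single-source wavefront propagation in $O(c\log c)$ time, for a total of $O(nc\log c)$, which is dominated by the target bound. A leaf holds a single site, whose $\FV$ is the whole plane. Unrolling the recurrence over its $O(\log n)$ levels, and using that the subproblem sizes on any fixed level sum to $O(n)$, yields
\[ T(n) = O\!\bigl(nc\log^2(n+c)\bigr)\cdot O(\log n) = O\!\bigl(nc\log n\log^2(n+c)\bigr), \]
which is the claimed bound.

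For correctness I would invoke the analogy established above: the medial axis $\mathcal{M}(P)$ of a polygon encodes, for every point of the plane, a shortest path to the polygon, exactly as $\SPM_p$ does for the site $p$ in the city metric, and an edge shared by two regions of $\SPM_p$ lies on an $L_1$ needle-bisector of constant complexity (Lemma~\ref{lem-mVv}). Replacing $\mathcal{P}$ and $\mathcal{M}(P)$ by $S$ and $\SPM_p$ therefore transfers Cheong et al.'s merge-curve discovery and tracing routine verbatim, provided the structural facts it uses still hold in our setting: $B_C(S_1',S_2')\cap\FV(S)$ is a disjoint union of simple curves (each open or closed); every open merge curve meets the boundary of the relevant arrangement, so its endpoints are among the vertices of $\FV(S_1'),\FV(S_2')$; and every closed merge curve contains at least one of the $O(n'c)$ query points. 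These hold because a city bisector, although it may be a closed curve of complexity $\Theta(c)$, is connected and meets each refined $\SPM$-region in a constant-complexity $L_1$ bisector.

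The step I expect to be the main obstacle is this last structural verification for the city metric — in particular, ruling out a closed merge curve lying entirely between two consecutive query points. I would argue it as Cheong et al. do for polygons: a closed merge curve must separate some feature of one refined subdiagram $\FV(S_i')$ (refined by all the $\SPM_p$) from the rest, so some refinement vertex — one of our query points — lies on the side where its two across-the-partition farthest-site answers disagree, and the standard test detects it. Everything else is a transcription of the Euclidean argument with $\log n$ replaced by $\log(n+c)$ wherever a data structure of size $\Theta(nc)$ is queried.
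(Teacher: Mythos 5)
Your proposal matches the paper's argument: precompute all $\SPM_p$ in $O(nc\log c)$ time, merge via Cheong et al.'s scheme with $\mathcal{M}(P)$ replaced by $\SPM_p$, bound each merge level by $O(nc)$ point-location queries at $O(\log^2(n+c))$ each, and multiply by the $O(\log n)$ levels of the recursion. The extra discussion of merge-curve structure only elaborates on what the paper delegates to the analogy with the farthest-polygon algorithm, so this is essentially the same proof.
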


\begin{proof}
In the beginning, for each site $p\in S$,
${\cal FV}(\{p\})$ is exactly $\SPM_p$.
Computing $\SPM_p$ takes $O(c\log c)$ time \cite{BKC-09},
implying that computing ${\cal FV}(\{p\})$, for all $p \in S$, takes $O(nc\log c)$ time.
Consider the merge process at some level $i$.
The set $S$ is divided into $2^i$ subsets,
and each of them contains at most 
$n/2^i$ sites.
Therefore, the merging process at level $i$
takes $2^i\cdot O(n/2^i\log^2 (n/2^i+c))=O(nc\log^2 (n+c))$ time.
Since there are $\log n$ levels,
${\cal FV}(S)$ can be computed in $O(nc\log n\log^2(n+c))$ time.
\qed
\end{proof}

\section{Conclusion}
\label{sec-conclusion}

We contribute two major results for the \kthorder city Voronoi diagram.
First, we prove that its structural complexity is $O(k(n-k)+kc)$ and $\Omega(n+kc)$.
This is quite different from the $O(k(n-k))$ bound in the Euclidean metric~\cite{Lee-82}.
It is especially noteworthy that when $k=n-1$, i.e., the farthest-site Voronoi diagram,
its structural complexity in the Euclidean metric is $O(n)$,
while in the city metric it is $\Theta(nc)$.
Secondly, we develop the first algorithms that compute the \kthorder city Voronoi diagram
and the farthest-site Voronoi diagram.
Our algorithms show that traditional techniques can be applied to the city metric.
Furthermore,
since the complexity of the first-order city Voronoi diagram is $O(n+c)$,
one may think that the complexity the transportation network contributes to the complexity of the \kthorder city Voronoi diagram is independent of $k$.
However, our results show that the impact of the transportation network increases with the value of $k$
rather than being constant.

\deleted{
\chihhung{observation is a little weak. Maybe find a new term more contributive.}.
First, in most cases, e.g. general sites (line segments and polygons),
the complexity of the farthest-site Euclidean Voronoi diagram is still linear.
}
\deleted{
However, recently, Bae and Chwa \cite{BC-09} proved that complexity of farthest-site geodesic Voronoi diagram
is $\Theta(nm)$, where $m$ is the total complexity of obstacles.
First, the results indicate that
the underlying distance metric significantly influences the complexity of the corresponding Voronoi diagrams.
Second,
since the complexity of the first-order city Voronoi diagram is $O(n+c)$,
one may think that the complexity the transportation network contributes to the complexity of the \kthorder city Voronoi diagram is independent of $k$.
However, our results show that the impact of the transportation network increases with the value of $k$
rather than being constant.
Besides, with a transportation network on the Euclidean plane, i.e., the Euclidean city metric,
the size of the corresponding nearest-site Voronoi diagram is already $\Theta(nc)$.
From our results, we make a conjecture that the complexity of the \kthorder Voronoi diagram in the Euclidean city metric is $O(k(n-k)c)$
\chihhung{We also can remove the conjecture.
I originally want to mention a conjecture and prove in the journal version.}.}

\section*{Acknowledgment}
A. Gemsa received financial support by the \emph{Concept for the Future} of KIT within the
framework of the German Excellence Initiative. D. T. Lee and C.-H. Liu are supported by the National Science Council, Taiwan under grants No. NSC-98-2221-E-001-007-MY3 and No. NSC-99-2911-I-001-506.

\deleted{
\newpage
\appendix
\section*{Appendix}

\rephrase{Lemma}{\ref{lem-mix-diagram}}{\lemmixdiagramtext}

\rephrase{Lemma}{\ref{lem-mix-vertices}}{\lemmixverticestext}

\rephrase{Theorem}{\ref{thm-lower-diagram}}{\thmlowerdiagramtext}

\rephrase{Lemma}{\ref{lem-compute-region}}{\lemcomputeregion}

\rephrase{Lemma}{\ref{lem-compute-diagram}}{\lemcomputediagramtext}

\rephrase{Theorem}{\ref{thm-kth-time}}{\thmkthtimetext}

\rephrase{Theorem}{\ref{thm-farthest-time}}{\thmfarthesttimetext}

}

\end{document}